\documentclass[a4paper,3p]{elsarticle}

\usepackage{amsfonts,amsmath,amsthm}
\usepackage{algpseudocode}
\usepackage{algorithm}
\usepackage{tabularx}
\usepackage{subcaption}
\usepackage{pifont}
\usepackage[svgnames]{xcolor}
\usepackage{hyperref}
\usepackage{textcomp}

\newcommand{\smallHead}[1]{
	\vspace{0.5mm}\noindent\textbf{#1}}

\newcommand{\smallHeadIndent}[1]{
	\vspace{0.5mm}\textbf{#1}}

\newtheorem{lemma}{Lemma}

\algtext*{EndProcedure}
\algtext*{EndFunction}
\algtext*{EndIf}
\algtext*{EndFor}
\algtext*{EndWhile}

\algrenewcommand\algorithmicindent{1em}%

\let\oldbibliography\thebibliography
\renewcommand{\thebibliography}[1]{%
  \oldbibliography{#1}%
  \setlength{\itemsep}{0pt}
}

\journal{Artificial Intelligence}

\begin{document}

\begin{frontmatter}


\title{COL-Trees: Efficient Hierarchical Object Search in Road Networks}

\author{Tenindra Abeywickrama\texorpdfstring{\corref{cor1}}{}}
\ead{tenindra.abeywickrama@riken.jp}
\cortext[cor1]{Corresponding author}
\affiliation{organization={Center for Computational Science, RIKEN},
            country={Japan}}

\author{Muhammad Aamir Cheema}
\ead{aamir.cheema@monash.edu}
\affiliation{organization={Faculty of Information Technology, Monash University},
            country={Australia}}

\author{Sabine Storandt}
\ead{sabine.storandt@uni-konstanz.de}
\affiliation{organization={Department of Computer and Information Science, University of Konstanz},
            country={Germany}}



\begin{abstract}
Location-based services rely heavily on efficient methods that search for relevant points-of-interest (POIs) near a given location. A $k$ Nearest Neighbor ($k$NN) query is one such example that finds the $k$ closest POIs from an agent's location. While most existing techniques focus on retrieving nearby POIs for a single agent, these search heuristics do not translate to many other applications. For example, Aggregate $k$ Nearest Neighbor (A$k$NN) queries require POIs that are close to \emph{multiple} agents. $k$ Farthest Neighbor ($k$FN) queries require POIs that are the antithesis of nearest. Such problems naturally benefit from a hierarchical approach, but existing methods rely on Euclidean-based heuristics, which have diminished effectiveness in graphs such as road networks. We propose a novel data structure, COL-Tree (Compacted Object-Landmark Tree), to address this gap by enabling efficient hierarchical graph traversal using a more accurate landmark-based heuristic. We then present query algorithms that utilize COL-Trees to efficiently answer A$k$NN, $k$FN, and other queries. In our experiments on real-world and synthetic datasets, we demonstrate that our techniques significantly outperform existing approaches, achieving up to 4 orders of magnitude improvement. Moreover, this comes at a small pre-processing overhead in both theory and practice.
\end{abstract}



\begin{keyword}
road networks \sep point-of-interest queries \sep heuristic search



\end{keyword}

\end{frontmatter}


\section{Introduction}\label{intro}

Finding nearby relevant objects, and in particular points-of-interest (POIs), efficiently is a critical task in a variety of heuristic search and planning applications including, e.g., real-time path planning in video game maps \cite{bulitko2010case} or efficiently retrieving POIs in location-based services. Nearby relevant POIs are typically obtained by using a range query or a $k$ Nearest Neighbor ($k$NN) query. A range query~\cite{papadias2003ine} returns all POIs within a given distance from an agent's location, e.g., find all restaurants within $1$km from a user. A $k$NN query~\cite{zhong2015gtree} returns the $k$ closest POIs from an agent's location, e.g., find the three nearest gas stations to a taxi driver. The distance metric may be Euclidean distance (i.e., ``as the crow flies'') or road network distance (i.e., the length of the shortest path). In this paper, we consider road network distance as it is a more accurate and versatile measure of proximity in many real-world applications such as map-based services and is capable of representing a variety of metrics such as travel time, physical distance, toll cost, emissions, etc.

In the context of graphs like road networks, computing shortest path distances (or network distances) is significantly more expensive than Euclidean distance. As a result, efficient heuristics are needed to retrieve candidate POIs to answer the aforementioned queries and these typically rely on two implicit assumptions to reduce the overall cost: (a) the search occurs from a single agent; and (b) the final result POIs are in the general vicinity of the single agent's location. However, there are many important queries where one or both of these assumptions do not hold. For example, an Aggregate $k$ Nearest Neighbor (A$k$NN) query~\cite{yiu2006aknn} is a natural extension of a $k$NN query for \textit{multiple agents}, where POIs are retrieved considering their aggregate distance from all agents. Given a POI $p$ and a set of agents $Q$, the aggregate distance of $p$ from the agents is $d_{agg} (Q,p)= agg(d(q_i,p),\forall {q_i\in Q})$ where $d(q_i,p)$ denotes the road network distance from an agent $q_i$ to $p$ and $agg()$ is an aggregate function. For example, when the aggregate function is $sum$, $d_{agg}(Q,p)= \sum_{q_i\in Q} d(q_i,p)$ and when the aggregate function is $max$, $d_{agg}(Q,p)= \max_{q_i\in Q} d(q_i,p)$. An A$k$NN query returns the $k$ POIs with smallest aggregate distances. Another pertinent example is the $k$ Farthest Neighbors ($k$FN) problem~\cite{wang2016rkfn}. The opposite of the $k$NN problem, $k$FN queries involve retrieving the $k$ farthest POIs from an agent's location. In both A$k$NN and $k$FN queries, due to their definition, the result objects are often not among the closest to the agent locations and heuristics that retrieve candidates incrementally from the agents location (as in most existing work), are unlikely to be efficient.

Answering queries that consider multiple agents are highly relevant for many real-world applications. A$k$NN queries are widely used in transportation, e.g., ride-sharing \cite{stiglic2015benefits,drews2013multi}. For example, a group of friends planning to meet may want to choose a restaurant such that the total distance they need to travel is minimized. They can issue an A$k$NN query to find $k$ restaurants with the smallest aggregate distances where the aggregate function is $sum$. Similarly, consider an urban planning problem that requires building a fire station at one of many candidate sites with the goal to minimize the maximum distance of the station from residential areas in the city. In this case, an A$k$NN ($k=1$) query can be used to choose a candidate site $p$ such that its aggregate distance (with function $max$) from the set of residential areas is minimum among all candidate sites. A$k$NN search also has another important real-world application in \textit{shortest detour queries}, which is provided in many popular navigation apps. Given a source $s$ and a target $t$, a shortest detour query returns a POI $p$ such that $d(s,p) + d(p,t)$ is minimized. Consider a user who wants to stop at a gas station on their route from home to work. In this case, a shortest detour query helps find the gas station that minimizes the total distance the user needs to travel. Note that this is a special case of an A$k$NN query where $s$ and $t$ are considered to be the agent locations and the aggregate function is $sum$.

$k$FN queries also find several important real-world applications. They can be used to find spatial outliers, for example, to allow a healthcare facility to identify elderly residents that are farthest by travel time for potential relocation or a delivery company may be interested in finding the farthest customers from their warehouse to understand the worst-case delivery times. $k$FN queries may also be used in \textit{obnoxious facility location}~\cite{church2022obnoxiousreview} where the goal is to place facilities with less desirable characteristics as far away as possible from a sensitive location (or group of locations~\cite{gao2011afn}). For example, among a set of candidate sites, we may wish to place a potentially dangerous facility as far away by travel time from the center of a city as possible, such as a biological research facility or quarantine station to discourage civilians visiting or a maximum security prison to give authorities additional time to respond before civilians are exposed to any risk from escapees. The POI need not be obnoxious per se, e.g., placing a backup site or vault as far away from an existing site to maximize security. Each of these could be answered by a $k$FN query and would benefit from increased accuracy based on network distance metrics. Note that the objective is different to A$k$NN queries with $max$ aggregate function, which is a minmax problem. Farthest neighbor computation is also frequently used as a subroutine in solutions to many other problems, such as hierarchical clustering~\cite{dasgupta2005perfcluster}, feed-link queries~\cite{bose2013farthestptdiag}, traveling salesmen or vehicle routing problems~\cite{rosenkrantz1977farthesttsp}, and several metric $k$-center algorithms \cite{gonzalez1985cluster,dyer1985pcenter,diaz2017kcenter3approx}. 

Clearly, both A$k$NN and $k$FN queries have important and commonly used real-world applications where road network graph-based metrics would help improve solution accuracy. This necessitates efficient methods and, by extension, heuristics to solve them. While we focus on road networks, it is worth noting that these queries have applications in a wide variety of domains and can be used for other types of graphs such as social networks and wireless sensor networks, etc. In the rest of the paper, we use the term query location to refer to the location of an agent whenever clear by context. For ease of exposition, we assume the road network to be undirected (i.e., $d(x,y) = d(y,x)$ for any two points $x$ and $y$). The proposed techniques can be easily extended for directed road networks (e.g., with one-way roads).

\subsection{Limitations of Existing Techniques}

The key to efficiently retrieving POIs in road networks is in developing heuristics to retrieve candidate POIs that are most promising. Heuristics to answer A$k$NN queries have largely been borrowed from $k$NN search~\cite{yiu2006aknn,zhu2010vn3aknn,yao2018fann}, which is problematic in several ways. First, intuitions to find $k$NNs do not translate to A$k$NNs. For example, the most efficient $k$NN heuristic \cite{abey2017knn} uses a recurrence rule stating that the $k$-th nearest POI must be adjacent to the $k-1$ nearest POIs. The rule is exploited by storing the nearest POI of every potential query location and adjacency relationships between regions associated with POIs. However, A$k$NN queries involve retrieving POIs by an aggregate distance from multiple query locations. Thus, the recurrence rule is no longer true and cannot be applied, rendering the heuristic unsuitable. This approach is even less desirable for $k$FN queries as it would involve evaluating all POIs to determine the farthest.

A better approach to find A$k$NNs and $k$FNs is to conduct a hierarchical search on the road network. One A$k$NN method~\cite{yiu2006aknn} searches an R-tree~\cite{guttman1984rtree} containing the POIs in a branch-and-bound manner according to a Euclidean distance heuristic, similar to another $k$NN technique~\cite{papadias2003ine}. R-trees are constructed in a way that recursively divides POIs into subsets by Minimum Bounding Rectangles (MBRs). During search, a lower-bound aggregate distance for all POIs in a child R-tree node is computed using the Euclidean distances to its MBR. Now the most promising tree branches can be visited to pinpoint result POIs. However, this is not ideal for road networks as Euclidean distance is only a loose lower-bound especially on metrics like travel time, making the heuristic less efficient. Moreover, the inefficiency is exacerbated for A$k$NNs as the error will also be aggregated, e.g., making it harder to distinguish between candidates. Furthermore, this method is not admissible for $k$FNs, and cannot be used to answer such queries at all. Although a maximum Euclidean distance could be computed to MBRs, Euclidean distance is still a lower-bound. $k$FN being a maximization problem, candidates must be retrieved by upper-bound to terminate optimally.

Landmark Lower-Bounds (LLBs) are a more accurate alternative to Euclidean distance~\cite{goldberg2005alt}. 
They involve pre-computing and storing certain distances to \textit{landmarks} in the road network, which can then be used to compute a lower-bound between any two locations using the triangle inequality. While LLBs have been employed to answer $k$NN queries~\cite{abey2017knn}, this entailed significant pre-processing time and space. Specifically, the number of landmarks that can be chosen is limited due to high indexing cost, which leads to poor lower-bounds when applied to $k$NN search. Furthermore, there is no efficient data structure to compute minimum LLBs to groups of POIs to perform hierarchical search similar to R-trees and Euclidean distance.

\subsection{Contributions}

Motivated by the aforementioned under-served real-world POI search queries and by the gap in the existing work, our primary contribution is the \textbf{COL-Tree index}, an efficient data structure for hierarchical search in road networks to locate POIs by landmark lower-bounds. COL-Tree enables utilization of significantly more landmarks than existing works, while keeping the pre-processing cost small in theory and practice. Our further contributions are as follows:

\begin{itemize}
    \itemsep0em
    \item SUL-Trees: a supporting index that significantly accelerates COL-Tree construction with pre-processing cost theoretically and experimentally lower or comparable to popular road network indexes. 
    \item Highly efficient heuristic search algorithms that utilize COL-Tree to answer A$k$NN, $k$FN, and range queries. We show COL-Trees are particularly suited for A$k$NN search due to a unique COL-Tree property that exploits the convexity-preserving nature of all common aggregate functions and for $k$FN search by using incremental candidate retrieval by upper-bounds for the first time.
    \item Our extensive experiments demonstrate the significant improvement in query time and heuristic efficiency of our techniques, achieving up to 4 orders of magnitude of improvement.
\end{itemize}

This manuscript is an extended version of a conference paper~\cite{abey2020aknn} selected as the best paper at the 30th International Conference on Automated Planning and Scheduling (ICAPS2020) and invited for submission as an extended journal paper. Compared to the conference version, we have: (a) proposed highly efficient query algorithms for two new queries ($k$FN and range queries); (b) reduced pre-processing time of SUL-Trees with a novel variant of Dijkstra search for subgraphs; (c) improved SUL-Tree index size and implementation efficiency by realizing a subgraph reordering technique that was proposed theoretically in the previous version; and (d) improved query performance with a root-based lower-bound optimization and exhaustive parameter testing (with new experiments and algorithmic insights on parameter choice). Note that we have renamed the data structures to COL-Tree and SUL-Tree from the previous version for clarity and easier pronunciation (apart from improved pre-processing, the data structures are unchanged).

\section{Preliminaries}\label{prelim}

\begin{table}
    \centering
    \begin{tabular}{|l|l|} \hline
        \textbf{Symbol} & \textbf{Description} \\ \hline 
        $G$ & A road network graph \\ \hline 
        $V$ & The vertex set of $G$, e.g., representing intersections \\ \hline 
        $E$ & The edge set of $G$, e.g., representing road segments \\ \hline 
        $P$ & The POI (object) vertex set ($P \subseteq V$) \\ \hline 
        $k$ & The number of result objects a query must return ($k \leq |P|$) \\ \hline 
        $d(s,t)$ & Shortest path (network) distance between s and t in $G$ \\ \hline 
        $LB(s,t)$ & A lower-bound on network distance, i.e., $LB(s,t) \leq d(s,t)$ \\ \hline 
        $UB(s,t)$ & An upper-bound on network distance, i.e., $UB(s,t) \geq d(s,t)$ \\ \hline 
        $agg()$ & A monotonic function that aggregates multiple input values, e.g., $sum$ or $max$\\ \hline 
        $ODL$ & List of object-landmark distances $(o,d(l,o)) \forall (o \in n_{leaf} \subseteq P$) for COL-Tree leaf node $n_{leaf}$\\ \hline 
        $SDL$ & List of subgraph vertex-landmark distances $(v,d(l,v)) \forall v \in n_T$ for SUL-Tree node $n_T$ \\ \hline 
    \end{tabular}
    \caption{Summary of frequently used notations and symbols}\label{tab:symbols}
\end{table}

\smallHead{Road Network: } 
We define a road network as a graph $G = (V,E)$. $V$ is a vertex set and $E$ is an edge set. Each edge $(u,v) \in E$ connects two vertices with weight $w(u,v)$ representing any real positive metric, e.g., the length or travel time of the edge. Network distance $d(s,t$) is the minimum sum of weights connecting vertices $s$ and $t$. We consider queries and objects (POIs) on graph vertices for simpler exposition.

\smallHead{Object Search Queries: } 
Finding relevant objects according to some criteria on their network distance is an extremely common task in map-based services. As motivated in Section~\ref{intro}, we focus on solving the following important object search queries for a set of object vertices $P \subseteq V$:

\begin{enumerate}
    \itemsep0em
    \item \textbf{Aggregate $k$ Nearest Neighbor (A$k$NN) Queries: } 
    Given a set of query vertices $Q \subseteq V$ and an aggregation function $agg$, an A$k$NN query retrieves the $k$ objects in $P$ with minimum aggregate distances from set $Q$. Aggregate distance $d_{agg}(Q,p)$ to object $p$ is computed by aggregating the network distances to $p$ from each query vertex $q \in Q$ by function $agg$. In this paper, we focus on presenting techniques for the case when the $agg$ function is either $sum$ or $max$. We remark that A$k$NN queries are also known as Group $k$NN queries \cite{papadias2004gnn}.
    \item \textbf{$k$ Farthest Neighbors ($k$FN) Queries: } 
    Given a query vertex $q \in V$, a $k$FN query retrieves the $k$ objects in $P$ with maximum network distances from $q$.
    \item \textbf{Range Queries: } 
    Given a query vertex $q \in V$ and a distance threshold $r$, a range query retrieves \textit{all} objects $p \in P$ that satisfy $d(q,p) \leq r$ (i.e., within the range).
\end{enumerate}

Note that $k$NN queries are a special case of A$k$NN queries where $Q$ contains only a single query location. For completeness we present results for $k$NN queries in~\ref{app:knn}.

\smallHead{Landmark Lower-Bounds (LLBs): } 
Landmark lower-bounds, also called differential heuristics \cite{goldenberg2011cdh}, involve selecting a set $L$ of $m$ ``landmark'' vertices and then pre-computing distances from each landmark to all vertices in $V$. Given two vertices $q$ and $p$, a lower-bound on network distance can be computed by Eq.~(\ref{eq:llb}) using the distances to landmark $l_i$ and the triangle inequality. The maximum lower-bound over all $m$ landmarks given by Eq.~(\ref{eq:llbmax}) gives the tightest lower-bound and is typically accurate even for small $m$ \cite{goldberg2005alt}.

\begin{equation}\label{eq:llb}
\left. LB_{l_i}(q,p) = |d(l_i,q)-d(l_i,p)| \right. \leq d(q,p)
\end{equation}

\begin{equation}\label{eq:llbmax}
\left. LB_{max}(q,p) \right. = \max\limits_{l_i \in L}(|d(l_i,q)-d(l_i,p)|)
\end{equation}

\smallHead{Landmark Upper-Bounds (LUBs): } 
Using the same pre-computed distances, an upper-bound can also be computed using the triangle inequality as in Eq.~(\ref{eq:lub}). In this case, the minimum upper-bound is chosen over all $m$ landmarks to obtain the tightest upper-bound. 

\begin{equation}\label{eq:lub}
\left. UB_{min}(q,p) \right. = \min\limits_{l_i \in L}(d(l_i,q)+d(l_i,p)|
\end{equation}

\smallHead{Lower-Bound Aggregate Distances: } 
Lemma \ref{lemma:agg} \cite{yiu2006aknn} shows, for monotonic aggregate functions, the aggregate of lower-bound distances to object $p$ from each query vertex in $Q$ is a lower-bound $LB_{agg}(Q,p)$ on aggregate distance $d_{agg}(Q,p)$. 

\begin{lemma}\label{lemma:agg}
	Given a monotonic aggregate function $agg$ and lower-bounds distance $LB(q_i,p)$ from each query vertex $q_i \in Q$ to object $p$, the aggregation of the lower-bound distances gives a lower-bound aggregate distance on the true aggregate distance. I.e., $LB_{agg}(Q,p) = agg(LB(q_0,p),\ldots,LB(q_{|Q|},p)) \leq d_{agg}(Q,p)$.
\end{lemma}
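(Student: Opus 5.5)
The plan is to combine the per-query lower-bound inequalities with coordinatewise monotonicity of $agg$. First I fix what ``monotonic'' means: $agg$ is non-decreasing in each argument. That is, if $x_i \leq y_i$ for every index $i$, then $agg(x_0,\ldots,x_{|Q|}) \leq agg(y_0,\ldots,y_{|Q|})$. Both $sum$ and $max$ satisfy this. For $sum$, add the inequalities $x_i \leq y_i$ termwise. For $max$, note that each $x_i \leq y_i \leq \max_j y_j$, so $\max_i x_i \leq \max_j y_j$.

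The argument then has two steps. First, by definition of a lower-bound (e.g.\ $LB_{max}$ of Eq.~(\ref{eq:llbmax}), which is valid by the triangle inequality in Eq.~(\ref{eq:llb})), we have $LB(q_i,p) \leq d(q_i,p)$ for every $q_i \in Q$. Second, I apply monotonicity with $x_i = LB(q_i,p)$ and $y_i = d(q_i,p)$. This gives $LB_{agg}(Q,p) = agg(LB(q_0,p),\ldots) \leq agg(d(q_0,p),\ldots) = d_{agg}(Q,p)$, which is the claim.

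The only point needing care is that monotonicity must hold jointly in all arguments, not just in one. If only single-argument monotonicity were assumed, I would reach the same conclusion by a chain. I would replace the arguments one at a time, $d(q_i,p)$ by $LB(q_i,p)$, each replacement not increasing the value. The resulting chain of $|Q|$ inequalities then telescopes to the desired bound. There is no real obstacle here; the lemma is essentially the definition of monotonicity applied once.
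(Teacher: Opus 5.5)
Your proposal is correct: the pointwise bounds $LB(q_i,p) \leq d(q_i,p)$ combined with coordinatewise non-decreasing monotonicity of $agg$ give the claim directly. The paper does not prove this lemma itself; it attributes it to \cite{yiu2006aknn}, and your argument is the standard reasoning behind it, with the one addition of stating explicitly that ``monotonic'' must mean non-decreasing, which the paper leaves implicit.
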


\section{Data Structures}\label{index}

Before introducing our main data structure, the Compacted Object Landmark Tree (COL-Tree), to enable efficient hierarchical object search in road networks, we first describe a supporting index called a Subgraph-Landmark Tree (SUL-Tree). SUL-Trees are not strictly necessary to build COL-Trees, but significantly accelerate their construction and also help introduce the hierarchical subgraph structure.

\begin{figure}[t]
    \centering
    \begin{subfigure}[b]{0.4\linewidth}
        \includegraphics[width=\linewidth]{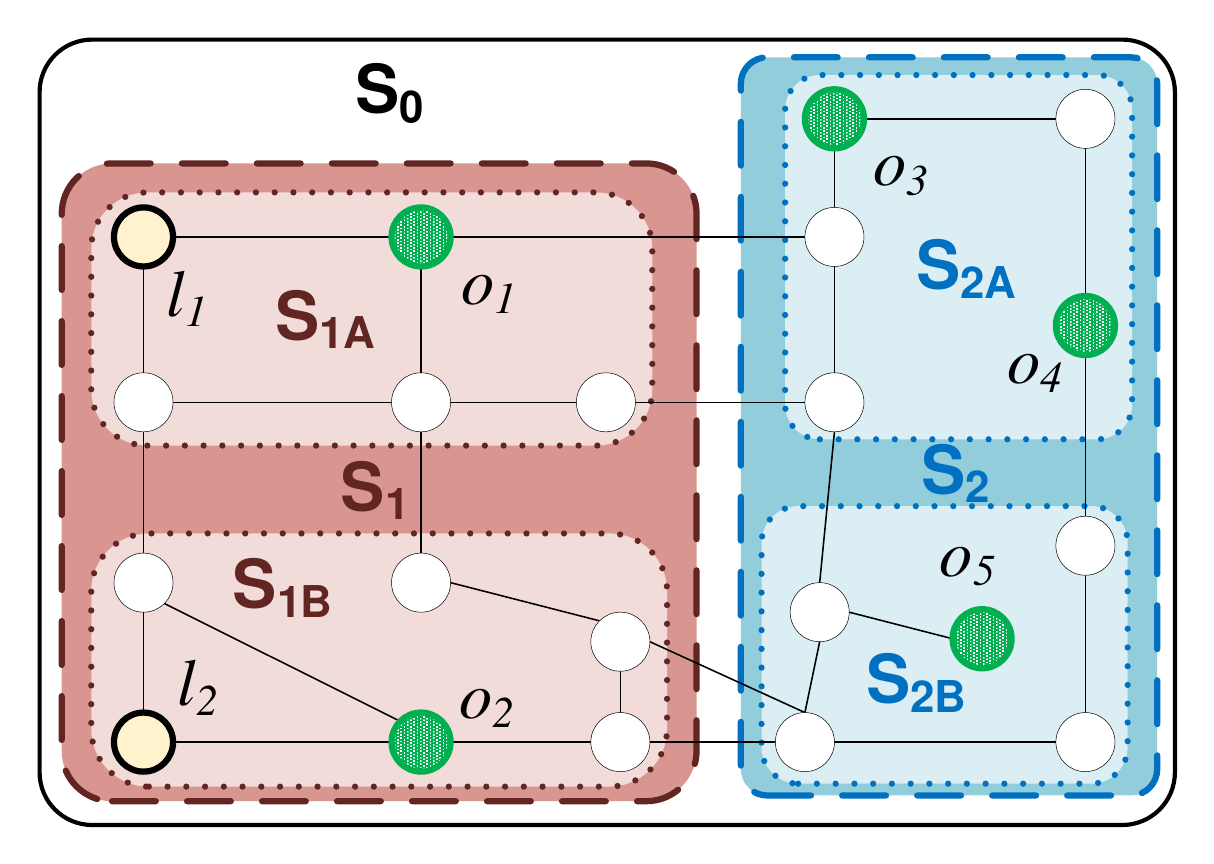}
        \caption{\small SUL-Tree Subgraph Partitions}
        \label{diag:vstree_graph}
    \end{subfigure}
    \hspace*{-3mm}
    \begin{subfigure}[b]{0.35\linewidth}
        \includegraphics[width=\linewidth]{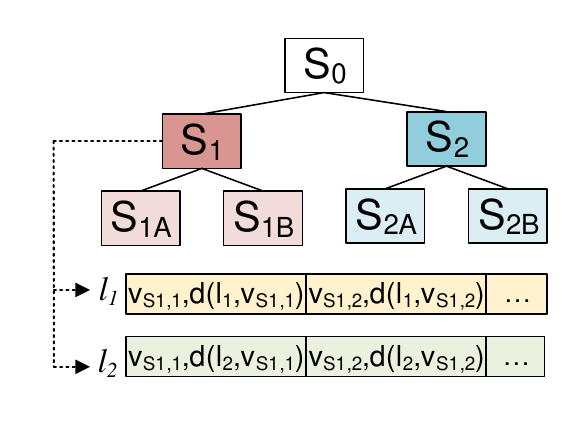}
        \caption{\small SUL-Tree Node Hierarchy}
        \label{diag:vstree_hierarchy}
    \end{subfigure}
	\caption{\small Subgraph Landmark Tree (SUL-Tree)}
	\label{diag:vstree}
\end{figure}

\subsection{Road Network Index: Subgraph-Landmark Tree (SUL-Tree)}\label{sec:sultreeconstruction}
A SUL-Tree is an index for a given road network $G$. Each node in the SUL-Tree represents a subgraph of the road network with $G$ being the tree's root. $G$ is recursively partitioned into $b$ disjoint subgraphs of equal size, stopping when a subgraph has no more than $\alpha$ vertices. Figure \ref{diag:vstree_graph} shows such a partitioning for $b=2$ and $\alpha=6$ with the corresponding SUL-Tree shown in Figure \ref{diag:vstree_hierarchy}. For example, root $S_0$ is partitioned into subgraphs $S_1$ and $S_2$ in Figure~\ref{diag:vstree_graph}, which are represented as the children of $S_0$ in Figure \ref{diag:vstree_hierarchy}. Note that we refer to tree \textit{nodes} and road network \textit{vertices}.

For each tree node $n_T$, we select $m$ of its vertices as \textit{local landmarks}, e.g., $l_1$ and $l_2$ with $m=2$ in $S_1$ in Figure \ref{diag:vstree_graph} (landmarks for other nodes are omitted for clarity). We then compute a Subgraph Distance List (SDL) $SDL_i$ for each landmark $l_i$ of node $n_T$, containing the distances from $l_i$ to all vertices contained in the subgraph associated with $n_T$. Figure \ref{diag:vstree_hierarchy} shows the $SDL$s for the landmarks of $S_1$ (those for other nodes are again omitted for clarity). $SDL_i$ can be computed using multi-target Dijkstra search with the landmark $l_i$ as the source and all subgraph vertices as the targets. Note that the Dijkstra search must expand vertices irrespective of whether they are contained in $n_T$ or not until all vertices of $n_T$ are settled. This is to ensure correctness because shortest paths between $l_i$ and some subgraph vertices may leave and then re-enter the subgraph. In Section~\ref{sec:optmizations:search}, we present an optimized subgraph Dijkstra that restricts this search using landmarks in higher levels of the SUL-Tree. Furthermore, subgraph vertices are stored implicitly by vertex ID ranges using a subgraph vertex ordering presented in Section~\ref{sec:optmizations:ordering} (along with other space optimizations). Any graph partitioning method that can output similar-sized subgraphs can be used, e.g., we utilize the widely used METIS~\cite{karypis1998fast}.

\subsection{Object Index: Compacted Object Landmark Tree (COL-Tree)}
The SUL-Tree can be used to efficiently construct our object index, the Compacted Object-Landmark Tree (COL-Tree). COL-Tree is a carefully compacted version of the SUL-Tree for object set $P$. Compaction ensures that there are $m$ local landmarks for at most $\lambda$ objects in leaf nodes, to increase the likelihood of finding a tighter lower-bound for more objects. Note that the SUL-Tree is shared between initial construction and any rebuilding of all COL-Trees, i.e., for many different object sets and potentially multiple times.

Given SUL-Tree $T$ and object set $P$, a COL-Tree $C$ is constructed by visiting nodes in $T$ in a top-down manner and creating corresponding nodes in $C$. Let $n_T$ be the currently visited node in $T$ (initially the root). A corresponding node $n_C$ is created in $C$ for $n_T$. Let $\lambda$ be the maximum number of objects allowed in a leaf node of a COL-Tree. If $n_T$ contains more than $\lambda$ objects, the search expands to its children. Otherwise, the construction is pruned at $n_C$, which becomes a leaf-node of $C$. For the new leaf $n_C$, an \textit{Object Distance List} (ODL) $ODL_i$ is created in $n_C$ for each landmark $l_i$ in $n_T$. These are simply the $SDL$s of $n_T$, except with only the distances for object vertices from $P$. 
Figure \ref{diag:hpol_tree} shows a COL-Tree index for $\lambda=2$ constructed from the SUL-Tree in Figure \ref{diag:vstree_hierarchy} based on the 5 object vertices (green shaded vertices) in Figure \ref{diag:vstree_graph}. Note that $S_{1A}$ and $S_{1B}$ were removed as construction was pruned at $S_1$ due to its number of objects (the $ODL$s of other nodes are omitted for clarity). Also note that we have chosen $\lambda < \alpha$ to simplify the example, but generally require $\lambda \geq \alpha$ to guarantee no leaf node and therefore no $ODL$ contains more than $\lambda$ objects. For example, if every vertex in a leaf node is an object there will be $\alpha$ objects in its $ODL$.

Each $ODL_i$ of a leaf node $n_{leaf}$ in $C$ is sorted by distance. In non-leaf nodes $n_C$, we only store the minimum distance $min_{n_C,l_i}$ and maximum distance $max_{n_C,l_i}$ to any object in the node from each landmark $l_i$. These can be computed by lookup of the Subgraph Distance List $SDL_i$ of the SUL-Tree node corresponding to $n_C$. As we describe next, these bookkeeping components of COL-Tree allow computation of lower-bounds to nodes and traversal of the hierarchy.

\begin{figure}[htbp]
    \centering
    \begin{subfigure}[b]{0.35\linewidth}
        \includegraphics[width=\linewidth]{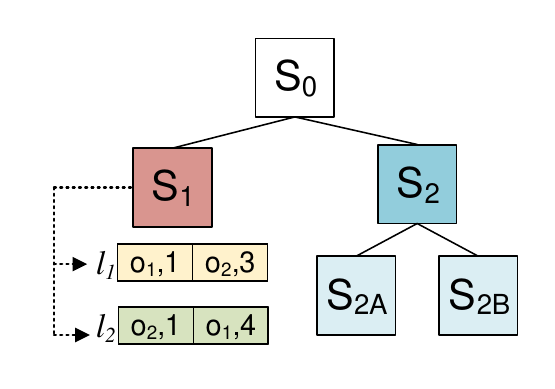}
        \caption{\small COL-Tree Hierarchy}
        \label{diag:hpol_tree}
    \end{subfigure}
    \hspace*{-3mm}
    \begin{subfigure}[b]{0.4\linewidth}
        \includegraphics[width=\linewidth]{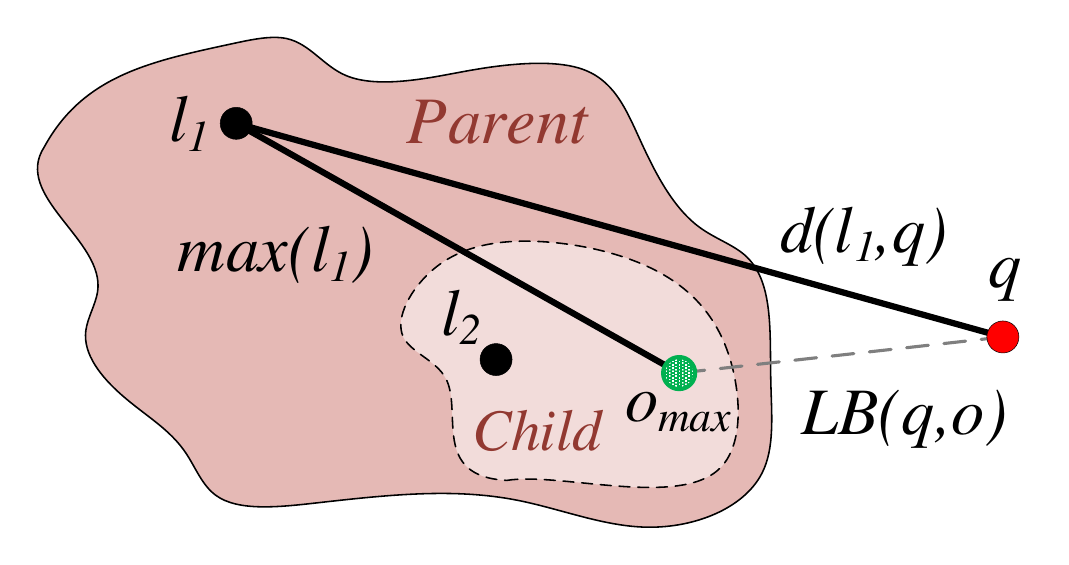}
        \caption{\small COL-Tree Lower-Bounds}
        \label{diag:hpol_lmk}
    \end{subfigure}
	\caption{\small Compacted Object-Landmark Tree (COL-Tree)}
	\label{diag:hpol}
\end{figure}

\subsection{Lower-Bound Heuristic for Graph Traversal}\label{index:COL-Treellbs}

Similar to Eq.~(\ref{eq:llb}), we can compute a lower bound for all objects contained within a node $n_C$ in COL-Tree $C$ by Eq.~(\ref{eq:nodellb}) for one landmark $l_i$ of $n_C$. Eq.~(\ref{eq:nodellbmax}) gives the best lower-bound over all $m$ landmarks of $n_C$:

\begin{equation}\label{eq:nodellb}
LB_{l_i}(q,n_C) =
\begin{cases}
d(l_i,q)-M^+  & \mbox{if } d(l_i,q) \geq M^+ \\
M^- - d(l_i,q) & \mbox{if } d(l_i,q) \leq M^- \\
0 & \mbox{else}
\end{cases}
\end{equation}

\begin{equation}\label{eq:nodellbmax}
\left. LB_{max}(q,n_C) \right. = \max\limits_{l_i \in n_C}(LB_{l_i}(q,n_C))
\end{equation}

Similar to Eq.~(\ref{eq:lub}), we can also compute an upper-bound on the distance to any object by Eq.~(\ref{eq:nodelubmax}). This will be highly useful for efficiently computing $k$FNs and other queries based on farthest metrics. 

\begin{equation}\label{eq:nodelubmax}
\left. UB_{min}(q,n_C) \right. = \min\limits_{l_i \in n_C}(d(l_i,q)+M^+)
\end{equation}

\smallHead{Network Distance Substitution:} $M^-:=min_{n_C,l_i}$ and $M^+:= max_{n_C,l_i}$ for $n_C$ are already available in COL-Tree. However, for non-root nodes, $d(l_i,q)$ in Eq.~(\ref{eq:nodellb}) is only available if $l_i$ and $q$ are in the same subgraph. Pre-computing this distance for all $V$ and landmarks is infeasible given the space implications and defeats the purpose of using subgraph landmarks. Alternatively, computing $d(l_i,q)$ on the fly during query time is expensive and may be wasteful if the node does not contain results. Clearly, Eq.~(\ref{eq:nodellb}) still holds if we replace the distances with lower-bound $LB(l_i,q)$ and upper bound $UB(l_i,q)$, as in Eq.~(\ref{eq:nodellb_relax}). The $SDL$s of the SUL-Tree root or lowest common ancestor node can be conveniently used to compute the best $LB(l_i,q)$ and $UB(l_i,q)$ by Eq.~(\ref{eq:llbmax}) and its upper-bound by Eq.~(\ref{eq:lub}), respectively. Choosing the tightest over all landmarks of $n_C$ gives an inexpensive and accurate bound even for a small number of landmarks:

\begin{equation}\label{eq:nodellb_relax}
LB_{l_i}(q,n_C) =
\begin{cases}
LB(l_i,q)- M^+   & \mbox{if } LB(l_i,q) \geq M^+ \\
M^- - UB(l_i,q) & \mbox{if } UB(l_i,q) \leq M^- \\
0                        & \mbox{else}
\end{cases}
\end{equation}

\smallHead{Alternate Bound by Root Landmarks:} Another cheaper alternative to Eq.~(\ref{eq:nodellb}) can be computed using the root landmarks. During COL-Tree construction, in addition to $M^-$ and $M^+$ for the node's landmarks, we can also compute the minimum and maximum distances, $M^-_R$ and $M^+_R$, from the root landmarks to any object essentially for free. This is because $SDL_i$ of SUL-Tree root node $n_R$ contains the distances from each root landmark $l_i$ to all vertices. Thus, during querying, we can similarly retrieve $d(q,l_i)$ in $O(1)$ time for each root landmark by looking up $SDL_i$ of $n_R$ and utilize it in Eq.~(\ref{eq:nodellb}) with $M^-_R$ and $M^+_R$ to compute an alternate lower-bound. We can simply choose the tighter value between this and Eq.~(\ref{eq:nodellb_relax}). The alternate bound is likely to help when  $LB(l_i,q)$ or $UB(l_i,q)$ in Eq.~(\ref{eq:nodellb_relax}) are less accurate. 

\subsubsection{SUL-Tree Landmark Selection Policy}\label{index:landmarksel}
Landmark vertices must be chosen during SUL-Tree construction such that they produce good lower-bounds irrespective of object set $P$. Past studies have shown that randomly chosen landmarks work reasonably well in-practice~\cite{goldberg2005alt}, which we also find to be true.  However, as observed in past work, some landmark selection policies may result in better lower-bounds. For example, \citet{goldberg2005alt} showed that landmarks at fringes of the graph resulted in better lower-bounds because it increased the likelihood of a source vertex being an ancestor (or having less deviation from the ancestor) of a target vertex in the shortest path tree from the landmark. Inspired by their approaches, we propose several landmark selection policies, which we evaluate in our experiments against a random baseline:

\begin{itemize}
    \itemsep0em
    \item \textit{Furthest Border}: Inspired by the \textit{furthest vertex} selection policy of \cite{goldberg2005alt}, we adapt this definition for subgraphs. Starting with a randomly chosen border vertex, the remaining border that is furthest from the landmarks chosen thus far is selected as the next landmark. If there are insufficient borders to select $m$ landmarks, random subgraph vertices are chosen.
    \item \textit{Slice-Based Furthest Border}: We first partition the plane into $m$ equally sized slices around the Euclidean center of the subgraph associated with the SUL-Tree node, as if we were cutting a cake. We choose a border as a landmark from each slice. If a slice has no borders, we choose the slice vertex furthest from the Euclidean center. If a slice has no vertices, we randomly choose a subgraph vertex to ensure $m$ landmarks. The intuition for this method is that there is always a landmark ``behind'' the objects and a query vertex, which is more likely to produce better lower-bounds.
    \item \textit{Border Minmax}: Existing work on landmark selection has focused on improving point-to-point shortest path queries, but their intuitions do not necessarily translate to object search. For example, a landmark ``behind'' the objects may only produce a good lower-bound for one query vertex, which is unlikely to help A$k$NN queries with multiple query vertices. To better accommodate such scenarios, we propose the \textit{Border Minmax} policy, which attempts to find an approximate subgraph  vertex center. After randomly choosing a small subset of subgraph borders, we compute the maximum distance from each subgraph vertex to these borders (performed efficiently via  Dijkstra search on the reverse graph). We choose the $m$ subgraph vertices with minimum maximum distances to the subset of borders.
\end{itemize}

\subsubsection{Tightening Lower-Bounds with Tree Depth}
The accuracy of COL-Tree's inexpensive lower-bounds increases as we delve deeper into the hierarchy. In Figure \ref{diag:hpol_tree}, let us say we use $l_1$ and its maximum object distance to compute a lower-bound for the parent node. At the lower level, we may use the child's landmarks like $l_2$, which are local to the objects and more likely to produce a better lower-bound. This lets us differentiate tree branches and pinpoint the most promising candidates. Finally, at the leaf nodes (with objects), COL-Tree guarantees there are at most $\lambda$ objects per landmark to maximize differentiation.

\section{Query Algorithm for A\textit{k}NN Search}\label{query:aknn}

The \textit{decoupled heuristic} paradigm is a highly effective strategy for POI search in road networks~\cite{abey2016knn,yao2018fann}. This involves iteratively retrieving candidates objects (POIs) by an admissible heuristic and using a separate road network index to compute exact distances. A set of the best candidates found so far is maintained with termination occurring at optimality when no remaining object can improve the result set. The exact cost computation is ``decoupled'' from the search heuristic allowing any road network index to be used, which have been extensively studied for decades~\cite{wu2012shortest,kawata2014phl}. Here we show that COL-Trees are ideally suited to answer A$k$NN queries using this paradigm. Specifically, COL-Tree considers fewer A$k$NN candidates to find the object with the minimum lower-bound using a novel property of $ODL$s in leaf nodes.

\subsection{Object Distance Lists and Convexity}\label{query:convex}

Section~\ref{index:COL-Treellbs} described how to efficiently navigate a COL-Tree to find the leaf node that likely contains the next best candidate object by lower-bound distance. Then, retrieving candidate objects from the leaf node $n_{leaf}$ involves carefully evaluating its $ODL$s. Note that~Eq. (\ref{eq:llb}) can be expressed as an absolute value function of form $f(x) = |C-x|$ for some landmark $l$. Here, $C$ is the constant distance $d(l,q)$ between the landmark $l$ and the query point $q$, and $x$ is a variable distance $d(l,p)$ depending on the object $p \in P$. Since absolute-value functions are convex, $f(x)$ is minimized for $x$ closest to $C$. This property is useful to find the minimum lower-bound in an $ODL_l$ for landmark $l$ for a single query vertex $q$. Since $ODL_l$ essentially stores the domain of $x$ (i.e., $d(l_i,o) \forall o \in n_{leaf}$), the minimum lower-bound for the landmark $l$ can be found by searching $ODL_l$ for $d(l,c)$ closest to $d(l,q)$ for some object $c \in ODL_l$. Since $ODL_l$ is sorted, this is possible using a modified binary search, as observed by~\cite{abey2017knn}, in only $O(\log \lambda)$ time.

It is not immediately clear whether this idea can be extended to minimum lower-bound \textit{aggregate} distance, which is complicated by the presence of multiple query locations $q_i\in Q$. From Lemma \ref{lemma:agg}, we know the aggregate of \textit{lower-bound} distances from each query vertex $q_i$ is a lower-bound on aggregate distance for monotonic functions~\cite{yiu2006aknn}. Therefore, the function to minimize becomes $f(x) = agg(|C_1-x|,\cdots,|C_n-x|)$ for a monotonic aggregate function $agg$ where $C_i$ is $d(l,q_i)$ for the given landmark $l$ and a query $q_i \in Q$. At first glance, it appears we need to search $ODL$s for multiple values (i.e., for each $C_i$) to find the object with minimum aggregate lower-bound. Surprisingly, it is not necessary for aggregate functions that preserve convexity. Moreover, most widely used functions~\cite{papadias2005ann}, such as $max$ and $sum$, do preserve convexity. \citet{boyd2004convex} prove convexity preservation for a wide range of functions.

Specifically, once the minimizing $x^*$ for the function $f(x)$ is found, iteratively retrieving the object that gives the next smallest lower-bound simply requires checking the element to the right or left of $x^*$ in the $ODL$, due to the convexity of the function. This allows us to incrementally retrieve the objects by their lower-bounds and enable termination at optimality. However, unlike the single query case, finding the minimum of $f(x)$ is not obvious for aggregate $k$NN queries. Below, we show how to find the minimum for the two most common aggregate functions, $max$ and $sum$.

\begin{lemma}\label{lemma:sum}
	Consider the aggregate function defined by the sum of a set of absolute functions $f(x) = sum(|C_1-x|,\cdots,|C_n-x|)$. The minimum $x^*$ of $f(x)$ is the median value of the constants $C_1,...,C_n$.
\end{lemma}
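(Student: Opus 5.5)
The plan is to show that $f(x)=\sum_{i=1}^n |C_i-x|$ is a convex, piecewise-linear function whose slope changes sign exactly at the median of the constants. Without loss of generality, relabel the constants so that $C_1\le C_2\le\cdots\le C_n$. Each term $|C_i-x|$ is convex, and a sum of convex functions is convex, so $f$ is convex. This also matches the convexity-preservation remark made earlier. It therefore suffices to locate a point where $0$ lies in the subdifferential of $f$. Equivalently, we need a point where $f$ is non-increasing to its left and non-decreasing to its right.

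\textbf{Step 1: the slope.} For $x$ not equal to any $C_i$, each term contributes slope $-1$ if $x<C_i$ and $+1$ if $x>C_i$. Hence
\begin{equation*}
f'(x) = \#\{i : C_i < x\} - \#\{i : C_i > x\}.
\end{equation*}
This slope is a non-decreasing step function of $x$. It equals $-n$ to the left of $C_1$ and $+n$ to the right of $C_n$.

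\textbf{Step 2: where the slope changes sign.} We split on the parity of $n$.
\begin{itemize}
\item If $n$ is odd, let $m=(n+1)/2$. For $x$ slightly less than $C_m$, at most $m-1$ constants lie below $x$ and at least $m$ lie above, so the slope is negative. Symmetrically, for $x$ slightly greater than $C_m$ the slope is positive. Hence $x^*=C_m$, the median, is the minimizer.
\item If $n$ is even, the same counting shows the slope is $\le 0$ to the left of $C_{n/2}$ and $\ge 0$ to the right of $C_{n/2+1}$. On the interval $[C_{n/2},C_{n/2+1}]$ the slope is exactly $0$, so $f$ is constant there. Every point of this interval is a minimizer, including either middle value.
\end{itemize}
The conclusion of Step 2 is that the median of the $C_i$ (in the even case, any value between the two middle constants) minimizes $f$. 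Since $f$ is continuous, non-increasing up to $x^*$ and non-decreasing after it, $x^*$ is a global minimum.

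\textbf{Main obstacle.} The only delicate point is the counting argument when several $C_i$ are equal, together with the even-$n$ case, where "the median" must be read as any point of the middle interval. I would handle ties by working with one-sided derivatives at the breakpoints rather than the pointwise derivative. Concretely, I would show that the left derivative at the median is $\le 0$ and the right derivative is $\ge 0$, using the counts of constants strictly below and strictly above. This makes the argument robust to ties without extra case analysis.

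Finally, for use in the algorithm: the search should look in the $ODL$ for the entry closest to this median value, exactly as in the single-query binary search. Convexity then lets candidates be retrieved incrementally by stepping left or right from that entry.
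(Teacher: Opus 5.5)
Your proof is correct, but it takes a different route from the paper. The paper uses no convexity or derivatives. It fixes the median $x^*$ and an arbitrary competitor $x'$ (say $x'<x^*$, with $d=|x^*-x'|$) and compares $f(x^*)$ with $f(x')$ term by term. Every constant $C_j\ge x^*$ contributes exactly $d$ more to $f(x')$ than to $f(x^*)$. Every constant $C_k<x^*$ contributes at most $d$ less, by the triangle inequality. Since at least $\lceil n/2\rceil$ constants satisfy $C_j\ge x^*$, the gains dominate and $f(x^*)\le f(x')$.

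That exchange argument is purely elementary. It handles ties and even $n$ without case analysis, because it only needs enough constants on each side of $x^*$. However, it only certifies that the median is \emph{a} minimizer, and it leaves $x'>x^*$ to symmetry.

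Your slope formula $\#\{C_i<x\}-\#\{C_i>x\}$ is essentially the infinitesimal version of the same count, but it gives you more. It identifies the full minimizer set: the single point $C_{(n+1)/2}$ for odd $n$, and the whole interval $[C_{n/2},C_{n/2+1}]$ for even $n$. It also shows that $f$ is non-increasing left of $x^*$ and non-decreasing right of it. That unimodality is exactly what the algorithm needs when it walks left and right from $x^*$ in the $ODL$; the paper instead gets it from its general convexity-preservation remark.

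One small point: your counting already copes with ties as written. The bounds ``at most $m-1$ constants below $x$'' and ``at least $m$ above'' only improve when constants coincide. So the one-sided-derivative treatment you list as the main obstacle is unnecessary, though harmless.
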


\begin{proof}
	Let constants be sorted such that $C_1 \leq C_2 \leq ... \leq C_n$. Let $x^*$ be the median of these constants. We show that $f(x^*) \leq f(x')$ for all $x'$. Let $d=|x^*-x'|$. Without loss of generality, assume $x' < x^*$. For each $C_k<x^*$,  the difference between $|C_k - x^*|$ and $|C_k - x'|$ is at most $d$, i.e., $|C_k - x^*| - |C_k - x'| \leq d$. On the other hand, for each $C_j\geq x^*$, the difference between $|C_j-x'|$ and $|C_j-x^*|$ is exactly $d$, i.e., $|C_j - x'| - |C_j - x^*| = d$. Since $x^*$ is median of the constants, the number of constants $C_j\geq x^*$ is at least $\lceil \frac{n}{2} \rceil$. In other words, for at least half of the constants, $|C_i-x'| - |C_i-x^*| = d$ and for each of the remaining constants, $|C_i-x^*| - |C_i-x'| \leq d$. Thus, $f(x^*) \leq f(x')$.
\end{proof}

\begin{lemma}\label{lemma:max}
	Consider the aggregate function defined by the maximum of a set of absolute functions $f(x) = max(|C_1-x|,\cdots,|C_n-x|)$. The minimum $x^*$ of $f(x)$ is $\frac{C_{min}+C_{max}}{2}$, i.e., the average of the minimum and maximum constants.
\end{lemma}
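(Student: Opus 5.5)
The first step is to reduce $f$ to a function of only the two extreme constants. Let $C_{min}=\min_i C_i$ and $C_{max}=\max_i C_i$. For any $x$ and any $i$ with $C_{min}\le C_i\le C_{max}$, the distance $|C_i-x|$ is at most $\max(|C_{min}-x|,|C_{max}-x|)$. This holds because $|C-x|$ is convex in $C$, so over the interval $[C_{min},C_{max}]$ it is maximized at an endpoint. Hence
\[
f(x)=\max(|C_{min}-x|,\,|C_{max}-x|)
\]
for every real $x$, and minimizing $f$ is the same as minimizing this two-term maximum.

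Next I evaluate it at $x^*=\frac{C_{min}+C_{max}}{2}$. Both terms there equal $\frac{C_{max}-C_{min}}{2}$, so
\[
f(x^*)=\frac{C_{max}-C_{min}}{2}.
\]

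To show this is optimal, I take an arbitrary $x'$ and set $d=|x^*-x'|>0$. Mirroring the proof of Lemma~\ref{lemma:sum}, I assume without loss of generality that $x'<x^*$. Then $x'<x^*\le C_{max}$, so
\[
|C_{max}-x'| = C_{max}-x' = (C_{max}-x^*)+d = \frac{C_{max}-C_{min}}{2}+d > f(x^*).
\]
Since $f(x')\ge |C_{max}-x'|$, it follows that $f(x')>f(x^*)$. The case $x'>x^*$ is symmetric, using $C_{min}$ instead of $C_{max}$. Therefore $x^*$ is the (unique) minimizer.

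The only step needing care is the reduction to the two extreme constants. Everything after it is a one-line computation, and the reduction itself follows immediately from the endpoint argument above. Strictly, the reduction is not even needed for the inequality: $f(x')\ge|C_{max}-x'|$ always holds. It is needed only to confirm that $f(x^*)$ equals the half-width rather than exceeding it.
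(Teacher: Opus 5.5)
Your proof is correct and follows the same route as the paper: reduce $f$ to $\max(|C_{min}-x|,|C_{max}-x|)$ and identify the midpoint as the minimizer of this two-term maximum. You also supply the details the paper leaves implicit: the convexity/endpoint justification for the reduction, and the explicit strict-inequality check showing the midpoint is the unique minimizer.
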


\begin{proof}
	For sorted constants, let $C_1 = C_{min}$ and $C_n = C_{max}$. Note that $f(x)= max(|C_1-x|,\cdots,|C_n-x|) = max(|C_1-x|,|C_n-x|)$. Thus,
the minimum value of $f(x)$ is $\frac{C_1+C_n}{2}$, i.e., minimum $x^*$ of $f(x)$ is $\frac{C_{min}+C_{max}}{2}$.
\end{proof}

\begin{algorithm}[htbp]
	\caption{Get AkNNs by COL-Tree for query vertex set $Q$}
	\small
	\begin{algorithmic}[1]
		\Function {GetAkNNs}{$k,Q,agg,COL\text{-}Tree,SUL\text{-}Tree$}
		\State Initialize min queue $\mathcal{PQ} \leftarrow \phi$ and result set (max queue) $R \leftarrow \phi$
		\State \textproc{Insert}{($\mathcal{PQ},COL\text{-}Tree.root,0$)} $, D_k \leftarrow \infty$
		\While {\textproc{MinKey}{($\mathcal{PQ}$)}  $<D_k$ and $\mathcal{PQ}$ not empty}
			\State $c \leftarrow $ \textproc{Extract-Min}{($\mathcal{PQ}$)}
			\If {$c$ is an object}\label{line:hpolaknn:obj:s}
				\State Compute $d(q_i,c) \forall q_i \in Q$ then apply $agg$ to get $d_{agg}(Q,c)$ 
				\If {$d_{agg}(Q,c) < D_k$}\label{line:hpolaknn:obj:e:start}
					\State \textproc{Insert}{($R,c,d_{agg}(Q,c)$)}, \textproc{Extract-Max}$(R)$,  $D_k \gets $ \textproc{MaxKey}$(R)$\label{line:hpolaknn:obj:e}
				\EndIf
			\ElsIf {$c$ is a non-leaf node}
				\ForAll {child node $e$ of $c$}\label{line:hpolaknn:nodenool:s}
					\State Compute best $LB_{agg}(Q,e)$ by Lemma \ref{lemma:agg} and node lower-bounds in Section~\ref{index:COL-Treellbs}\label{line:hpolaknn:nodenool:m}
					\State \textproc{Insert}{($\mathcal{PQ},e,LB_{agg,max}(Q,e)$)}\label{line:hpolaknn:nodenool:e}
				\EndFor
			\ElsIf {$c$ is a leaf node}
				\If {$c$ not seen before}\label{line:hpolaknn:nodeolfirst:s}
					\State Choose $ODL_{c.l}$ by landmark $c.l$ in $c$ with max average distance from $Q$ by LB\label{line:hpolaknn:landmark}
					\State Compute $d(q,c.l) \forall q \in Q$ and save for reuse
					\State Binary search element $i$ in $ODL_{c.l}$ with min $LB_{agg}(Q,c)$ by Lemma \ref{lemma:sum} or \ref{lemma:max}
					\State Let $c.RP \gets i$ and $c.LP  \gets i$\label{line:hpolaknn:nodeolfirst:e}
				\EndIf
				\State \textproc{RetrieveObjectsODL}{($\mathcal{PQ},Q,c$)}
				\State Compute $LB_{agg}(Q,c)$ for remaining $ODL_{c.l}$ objects (will be by $c.RP$ or $c.LP$)
				\State \textproc{Insert}{($\mathcal{PQ},c,LB_{agg}(Q,c)$)}
			\EndIf
		\EndWhile
		\State \Return $R$
		\EndFunction
		\Function {RetrieveObjectsODL}{$\mathcal{PQ},Q,c$}
			\While{$min(LB_{agg}(Q,ODL_{c.l}[c.LP],LB_{agg}(Q,ODL_{c.l}[c.RP])) \leq$ \textproc{MinKey}{($\mathcal{PQ}$)}\label{line:hpolaknn:odlwhile}
			}
				\State{$p \gets$ object at $c.LP$ or $c.RP$ with smaller lower-bound by $ODL_c.l$ \label{line:hpolaknn:leftorright}}	
				\State{Increment $c.RP$ or decrement $c.LP$ used to set $p$ at Line \ref{line:hpolaknn:leftorright}}
                \State Compute best $LB_{agg}(Q,p)$ by $c.l$ or $SUL\text{-}Tree$ root landmarks
				\State \textproc{Insert}{($\mathcal{PQ},p,LB_{agg,max}(Q,p)$)}\label{line:hpolaknn:plbmax}			
		\EndWhile
		\EndFunction
	\end{algorithmic}
	\label{alg:hpolaknn}
\end{algorithm}

\subsection{Search Algorithm}

Algorithm~\ref{alg:hpolaknn} uses hierarchical traversal of the COL-Tree by node lower-bounds to guide us towards $ODL$s in leaf nodes most likely to contain A$k$NN results. While SUL-Trees are an input to Algorithm~\ref{alg:hpolaknn}, only the root level is needed to compute lower and upper-bounds for Eq.~(\ref{eq:nodellb_relax}). The algorithm maintains a priority queue $\mathcal{PQ}$ containing objects and COL-Tree nodes keyed by their aggregate lower-bound distances from $Q$. The loop iteratively extracts the minimum lower-bound queue element. If an object is extracted, its exact aggregate distance is computed and the result set $R$ and $D_k$ are updated (lines \ref{line:hpolaknn:obj:s} to \ref{line:hpolaknn:obj:e}). The result set $R$ maintains up to $k$ objects with the smallest aggregate distances found so far by the algorithm and $D_k$ corresponds to the $k$-th largest aggregate distance among these objects. 

If a non-leaf node is extracted then an aggregate lower-bound score is computed according to Eq.~(\ref{eq:nodellbmax}) and Eq.~(\ref{eq:nodellb_relax}) for each of its child nodes (lines \ref{line:hpolaknn:nodenool:s} to \ref{line:hpolaknn:nodenool:e}). If it is a leaf-node, it is initialized if encountered for the first time (lines \ref{line:hpolaknn:nodeolfirst:s} to \ref{line:hpolaknn:nodeolfirst:e}). An $ODL$ to process is chosen (Line~\ref{line:hpolaknn:landmark}), the constants in the absolute value functions are computed, and a binary search is performed to find the minimizing list index given by Lemma \ref{lemma:sum} or \ref{lemma:max}. Pointers $RP$ and $LP$ are initialized to the index of the minimizing value. Then objects are retrieved from the list using $\textproc{RetrieveObjectsODL}$ and, if the list is not completely searched, the node is re-inserted into $\mathcal{PQ}$ with minimum lower-bound computed by $RP$ or $LP$. $LB_{agg,max}(Q,p)$ on Line \ref{line:hpolaknn:plbmax} computes the maximum (best) lower-bound aggregate distance for object $p$ over all landmarks for which networks distances to $p$ are already computed (e.g., landmarks in the root SUL-Tree node or $c.l$ in the leaf). We use the lower-bound for $p$ to avoid unnecessary network distance computations (if $p$ is a result it will be processed at Line~\ref{line:hpolaknn:obj:e:start}). Note that $LB_{agg,c.l}(Q,c)$ on Line~\ref{line:hpolaknn:odlwhile} refers to the lower-bound for all yet-to-be retrieved objects in the $ODL$ of leaf node $c$ corresponding to landmark $c.l$ (i.e., subject to $c.RP$ and $c.LP$).

Recall that leaf nodes contain $ODL$s for each of its $m$ landmarks so that we can choose the $ODL$ most likely to support the best candidate retrieval. For $k$NN queries, the landmark furthest from the query vertex is likely to be ``behind'' more objects and thereby result in better lower-bounds for more objects (Section~\ref{index:landmarksel}). Since A$k$NN involve multiple query vertices, at Line~\ref{line:hpolaknn:landmark}, we choose $c.l$ as the landmark with the largest average distance from the query vertices $Q$. Since we compute the lower-bound distance from query vertices to each landmark during node lower-bound computation using the SUL-Tree root’s $SDL$s (Line~\ref{line:hpolaknn:nodenool:m}), for leaf nodes we can also simply keep track of the landmark with largest average lower-bound distance. This provides an approximation on the furthest landmark essentially for free. Note $ODL$ selection does not affect correctness, but some ODLs may generate fewer false candidates.

\textbf{Proof Sketch for Correctness:}
The intuition behind the correctness of Algorithm~\ref{alg:hpolaknn} is that the hierarchical tree search maintains a lower-bound aggregate distance for every unseen object in the priority queue at all times. That is, there is a queue element for either (a) the object itself; or (b) a tree node (i.e., subgraph) containing the object. Thus, when the algorithm terminates, lower-bounds for all remaining objects are greater than $D_k$, the true aggregate distance for the $k$ result objects found so far. In other words, no other object can have an aggregate distance smaller than the current $k$ result objects. Non-result objects extracted from the queue (because the lower-bound was less than $D_k$) whose exact distance is greater than $D_k$ cannot improve the current result set and can be safely discarded.\section{Query Algorithm for \textit{k}FN Search}\label{query:fkn}

Euclidean $k$ Farthest Neighbors search can be performed through a simple variation on $k$NN search. For example, $k$NN search can be conducted using a lower-bound to all objects contained in an R-tree node by computing the minimum distance to its MBR. An upper-bound can be computed with similar ease, allowing Euclidean $k$FN algorithms to retrieve candidate objects by upper-bound and terminate when the next best (i.e., largest) upper-bound cannot improve the result set. \citet{papadias2003ine} observed that Euclidean distance can be used as a lower-bound for $k$NN search in road networks. Their technique, IER, was later shown to perform extremely well even when edge weights were travel-time~\cite{abey2016knn} (a lower-bound being obtained by dividing Euclidean distance by the maximum speed in the road network).

However, Euclidean distance \textit{cannot} similarly be used for $k$FN search in road networks. Although an upper-bound on the farthest Euclidean distance to an object can be obtained, the Euclidean distance can still only be used to compute a \textit{lower-bound} on network distance. Therefore, it cannot readily be used for $k$FN search in road networks as it is not admissible for this maximization problem. Moreover, there are no efficient data structures that could support road network $k$FN search without evaluating all objects. An expansion-based approach would involve incrementally retrieving all objects until the farthest one is reached, meaning it is more efficient to simply evaluate all objects in a loop. Fortunately, COL-Trees can support incremental retrieval of the best candidate by \textit{upper-bound}, as we demonstrate next.

\subsection{Search Algorithm}

In contrast to A$k$NN search, $k$FN search in Algorithm~\ref{alg:fkn} relies on maximum priority queue $\mathcal{PQ}$ keyed by the upper-bound distance to tree nodes and objects from the query $q$ to guide the hierarchical search. As in Algorithm~\ref{alg:hpolaknn}, priority queue elements that are objects, non-leaf nodes, and are handled separately but adapted for $k$FN search. For example, at Line \ref {line:fkn:lub}, we utilize Eq.~(\ref{eq:nodelubmax}) to compute an upper-bound distance to all objects contained in a COL-Tree node. Another difference is that the best candidate object can be initialized in $O(1)$ time without a binary search as the final element in the $ODL$ (Line~\ref{line:fkn:initialpoint}) and then we need only keep track of $c.RP$ (Line~\ref{line:fkn:trackright}), the right-most element of the $ODL$ evaluated so far. This is because the $ODL$ is sorted in increasing order of distance from the landmark and the object with largest distance from the landmark will clearly give the maximum (i.e., best) upper-bound by Eq.~(\ref{eq:lub}). 

The last major point of difference is that we choose the $ODL$ associated with the landmark closest to the query, as it is more likely to lie on or close to the shortest path from the query vertex to the objects leading to more accurate upper-bounds~\cite{potamias2009distest}. An approximation on this landmark can be conveniently saved during node upper-bound computation for leaf nodes (Line~\ref{line:fkn:lub}) using SUL-Tree root $SDL$s, by keeping track of the landmark with the smallest upper-bound distance to $q$. Algorithm~\ref{alg:fkn} retrieves the object with the next best (maximum) upper-bound until the results set $R$ cannot be improved, when it terminates returning the $k$ objects with the largest distances from $q$. 

Note that $UB_{c.l}(q,c)$ on Line~\ref{line:fkn:odlwhile} refers to the upper-bound for all yet-to-be retrieved objects in the $ODL$ of leaf node $c$ corresponding to landmark $c.l$ (i.e., subject to $c.RP$).

\begin{algorithm}[htbp]
	\caption{Get kFNs by COL-Tree for query vertex $q$}
	\small
	\begin{algorithmic}[1]
		\Function {GetKFNs}{$k,q,COL\text{-}Tree,SUL\text{-}Tree$}
		\State Initialize max queue $\mathcal{PQ} \leftarrow \phi$ and result set (min queue) $R \leftarrow \phi$
		\State \textproc{Insert}{($\mathcal{PQ},COL\text{-}Tree.root,0$)} $, D_k \leftarrow 0$
		\While {\textproc{MaxKey}{($\mathcal{PQ}$)}  $>D_k$ and $\mathcal{PQ}$ not empty}
			\State $c \leftarrow $ \textproc{Extract-Max}{($\mathcal{PQ}$)}
			\If {$c$ is an object}
				\State Compute $d(q,c)$
				\If {$d(q,c) > D_k$}
					\State \textproc{Insert}{($R,c,d(q,c)$)}, \textproc{Extract-Min}$(R)$, $D_k \gets $ \textproc{MinKey}$(R)$
				\EndIf
			\ElsIf {$c$ is a non-leaf node}
				\ForAll {child node $e$ of $c$}\label{line:}
					\State Compute best $UB(q,e)$ by node upper-bounds in Section~\ref{index:COL-Treellbs}\label{line:fkn:lub}
					\State \textproc{Insert}{($\mathcal{PQ},e,UB_{min}(q,e)$)}
				\EndFor
			\ElsIf {$c$ is a leaf node}
				\If {$c$ not seen before}
					\State Choose $ODL_{c.l}$ by landmark $c.l$ in $c$ closest to $q$ by UB\label{line:fkn:selectlmk}
					\State Compute $d(q,c.l)$ and save for reuse
					\State Initialize $c.RP$ to final element in $ODL_{c.l}$\label{line:fkn:initialpoint}
				\EndIf
				\State \textproc{RetrieveObjectsODL}{($\mathcal{PQ},q,c$)}
				\State Compute $UB(q,c)$ for remaining $ODL_{c.l}$ objects by $c.RP$
				\State \textproc{Insert}{($\mathcal{PQ},c,UB(q,c)$)}
			\EndIf
		\EndWhile
		\State \Return $R$
		\EndFunction
		\Function {RetrieveObjectsODL}{$\mathcal{PQ},q,c$}
			\While{$UB_{c.l}(q,ODL_{c.l}[c.RP]) \geq$ \textproc{MaxKey}{($\mathcal{PQ}$)}}\label{line:fkn:odlwhile}
				\State $p \gets$ object at $c.RP$ and decrement $c.RP$\label{line:fkn:trackright}
                \State Compute best $UB(q,p)$ by $c.l$ or $SUL\text{-}Tree$ root landmarks
				\State \textproc{Insert}{($\mathcal{PQ},p,UB_{min}(q,p)$)}
		      \EndWhile
		\EndFunction
	\end{algorithmic}
	\label{alg:fkn}
\end{algorithm}

\section{Query Algorithm for Range Search}\label{query:range}

While the benefit of COL-Tree's hierarchical approach can be clearly intuited in the case of A$k$NN and $k$FN queries, whether this translates to purely proximity-based search problems such as $k$NN and range queries is less clear. The state-of-the-art $k$NN search heuristic~\cite{abey2017knn} utilizes an adjacency property of Network Voronoi Diagrams (NVD) to incrementally retrieve the next nearest object from a pool of candidates likely to contain it. Such greedy incremental approaches translate well for $k$NN queries, where our goal is to find the $k$ results and terminate as soon as possible. While range queries also retrieve objects by proximity, they do not have the benefit of stopping after a fixed number of results, instead retrieving all objects within a radius $r$. Since COL-Tree can be used to compute lower- and upper-bounds to all objects in entire subgraphs, this can be leveraged to determine whether multiple objects are entirely within or outside radius $r$ in the best case. A COL-Tree approach is further motivated by the significant $O(|V|\log|V|)$ pre-processing time of NVDs~\cite{abey2017knn} compared to the $O(m|P|\log|P|)$ of COL-Trees (Table~\ref{tab:complexity}). In this section we present an algorithm to use COL-Trees to efficiently answer range queries based on network distance. 

\subsection{Search Algorithm}

Algorithm~\ref{alg:range} describes how to use COL-Tree for range queries. Unlike all competing methods, COL-Tree range search \textit{does not} require a priority queue, and exploration can be implemented using a stack with reduced insertion and extraction costs (Line~\ref{line:range:initialization}). Note that either DFS or BFS of the COL-Tree can be used as both will visit the same nodes. If the node $c$ popped from the stack is a non-leaf node, we evaluate each of its child nodes by computing their lower- and upper-bounds. If the upper-bound is less than or equal to the radius $r$ (Line~\ref{line:range:nonleafub}), then all objects must be results. An inner stack can be used to conduct a DFS from $c$ to immediately retrieve all objects contained in the leaf node descendants of $c$ (note that $c$ may be a leaf node) and insert them in result set $R$ (Line~\ref{line:range:allobjresults}). However, $c$ is pushed to the stack for further exploration if its upper-bound is greater than $r$ and the lower-bound is less than or equal to $r$ (Line \ref{line:range:lboverlap}), since it may have results. $c$ is implicitly pruned if the lower-bound is greater than $r$ as it cannot have results. Alternatively, if $c$ is a leaf node, we evaluate its objects using \textproc{RetrieveObjectsODL}. First we select the $ODL_{c.l}$ to search (Line \ref{line:range:odllmk}). From $k$FN queries we know that $ODL_{c.l}$ is, by definition, also sorted on upper-bound. Thus in Line \ref{line:range:odlubsearch}, we can iterate over $ODL_{c.l}$ in increasing order of upper-bound and report all objects with upper-bound within radius $r$ as results. Lines \ref{line:range:leftright:start} to \ref{line:range:leftright:end} retrieves any remaining results. Similar to A$k$NN search, the $ODL_{c.l}$ element associated with the object with minimum lower-bound can be found by binary search, allowing incremental lower-bound search to the left and right of this element. We account for objects already evaluated by upper-bound using $IP$ set to the element at which the upper-bound based iteration stopped (i.e., all elements before this have already been added to the result). Note that at Lines~\ref{line:range:leftright:middle} and~\ref{line:range:leftright:end}, inexpensive lower- and upper-bounds computed using SUL-Tree root landmarks can be used to further filter objects and potentially avoid expensive network distance computations.

To elaborate on $ODL_{c.l}$ choice at Line~\ref{line:range:odllmk}, consider $M^-_{c.l}+UB(q,c.l)$, a variation on Eq.~(\ref{eq:nodelubmax}) giving the minimum upper-bound for any object in $ODL_{c.l}$. We choose $ODL_{c.l}$ with either: (1) smallest $M^-_{c.l}+UB(q,c.l) \leq r$; or (2) if all $M^-_{c.l}+UB(q,c.l) > r$ we choose the $ODL_{c.l}$ associated with the landmark of $c$ furthest from $q$. Case (1) maximizes the number of objects that can be immediately determined as results by the loop at Line~\ref{line:range:odlubsearch}. In case (2) no object can be detected by upper-bound, so we choose the furthest landmark to increase the quality of lower-bounds. Similar to other queries, these landmarks can be approximated essentially for free during leaf node bound computation (Line~\ref{line:range:nodebound}).

\begin{algorithm}[htbp]
	\caption{Get objects within radius $r$ from query vertex $q$ by COL-Tree }
	\small
	\begin{algorithmic}[1]
		\Function {RangeSearch}{$r,q,COL\text{-}Tree,SUL\text{-}Tree$}
		\State Initialize stack $\mathcal{S} \leftarrow COL\text{-}Tree.root$ and result set $R \leftarrow \phi$\label{line:range:initialization}
		\While {$\mathcal{S}$ not empty}
			\State $c \leftarrow $ \textproc{Pop}{($\mathcal{S}$)}
			\If {$c$ is a non-leaf node}
				\ForAll {child node $e$ of $c$}
					\State Compute best $UB(q,e)$ and $LB(q,e)$) by node bounds in Section~\ref{index:COL-Treellbs}\label{line:range:nodebound}
                    \If {$UB_{min}(q,e) \leq r$}\label{line:range:nonleafub}
                        \State Retrieve all objects $p \in e$ by DFS and insert into $R$\label{line:range:allobjresults}
                    \ElsIf {$LB_{max}(q,e) \leq r$}
                        \State \textproc{Push}{($\mathcal{S},e$)}\label{line:range:lboverlap}
                    \EndIf
				\EndFor
			\ElsIf {$c$ is a leaf node}
				\State \textproc{RetrieveObjectsODL}{($\mathcal{S},q,c$)}
			\EndIf
		\EndWhile
		\State \Return $R$
		\EndFunction
		\Function {RetrieveObjectsODL}{$\mathcal{S},q,c$}
			\State Choose $ODL_{c.l}$ with smallest $M^-_{c.l}+UB(q,c.l) \leq r$ or landmark furthest from $q$ by LB\label{line:range:odllmk}
            \ForAll {objects $p$ of $ODL_{c.l}$ with $UB_{c.l}(q,p) \leq r$}  insert $p$ into $R$ \label{line:range:odlubsearch}
            \EndFor
            \State Let $IP$ be index in $ODL_{c.l}$ of first $p$ with $UB_{c.l}(q,p) > r$
            \State Let $LP$ be index for $p$ with $max(d(c.l,p)) \leq d(q,c.l)$ via binary search and $RP = max(IP,LP+1)$\label{line:range:leftright:start}
            \While{$LB_{c.l}(q,ODL_{c.l}[LP]) \leq r$ and $LP \geq IP$}
                \State Compute $d(q,ODL_{c.l}[LP])$, insert into $R$ if $d(q,ODL_{c.l}[LP]) \leq r$, and decrement $LP$\label{line:range:leftright:middle}
		      \EndWhile
            \While{$LB_{c.l}(q,ODL_{c.l}[RP]) \leq r$}
                \State Compute $d(q,ODL_{c.l}[RP])$, insert into $R$ if $d(q,ODL_{c.l}[RP]) \leq r$, and increment $RP$\label{line:range:leftright:end}
		      \EndWhile
		\EndFunction
	\end{algorithmic}
	\label{alg:range}
\end{algorithm}

\section{Complexity Analysis}\label{sec:complexity}

SUL-Trees and COL-Trees enable hierarchical search at the expense of increased the pre-processing costs over existing landmark-based indexes such as ALT (which, in any case, cannot support efficient object search). Although additional pre-processing cost is amortized with faster querying, we attempt to minimize this cost as much as possible, e.g., by limiting tree depth with $\lambda$. In this section we analyze the time and space complexity to show that increased pre-processing cost is small or comparable.

\smallHead{COL-Tree: } A COL-Tree is created from a SUL-Tree, which is a complete $b$-ary tree. In a COL-Tree for object set $P$, there are at most $O(\frac{|P|}{\lambda})$ leaf nodes. The top-down conversion of the SUL-Tree may result in the COL-Tree being unbalanced as objects may be disproportionately distributed in SUL-Tree nodes. But during the conversion, empty SUL-Tree nodes are pruned, resulting in the total number of nodes in COL-Tree being the same as a complete $b$-ary tree, i.e., $O(\frac{|P|}{\lambda})$ (branching factor $b$ is a small constant). However, COL-Tree's space complexity is dominated by the $m$ Object Distance Lists ($ODL$s) of length $\lambda$ stored in each leaf node. This results in $O(m|P|)$ total space as each object is only stored in one $ODL$. Compression of COL-Tree's height via implicit merging of sibling-less child nodes into parent nodes results in an average COL-Tree depth of $O(\log|P|)$. Given the top-down conversion and $O(1)$ look-ups (e.g., if implemented as hash-tables) of SUL-Tree Subgraph Distance Lists ($SDL$s) to construct $ODL$s, propagating $|P|$ objects to build $m$ $ODL$s and computing $M^-$/$M^+$ takes $O(m|P|\log|P|)$ time. Sorting all $ODL$s takes $O(|P|\lambda\log\lambda)$ time where $\lambda$ is a small constant. Thus, the total time complexity is $O(m|P|\log|P|)$. Note that $m$, the number landmarks per node, is a small constant.

\smallHead{SUL-Tree: } The height of the balanced SUL-Tree, with branching factor $b$ and $\frac{|V|}{\alpha}$ leaf nodes, is $O(\log_{b} \frac{|V|}{\alpha})$, i.e., $O(\log|V|)$ since $b$ and $\alpha$ are small constants. Given the disjoint partitioning, each vertex belongs to at most one tree node at each level, meaning there are $O(|V|)$ vertices per level. Then all $SDL$s associated with the SUL-Tree will take $O(m|V|\log|V|)$ space, since we store the distance from every vertex to $m$ landmarks at each level. This, of course, dominates the $O(|V|)$ space occupied by the tree nodes. For level $i$, there will be at most $b^i$ nodes, each with $\frac{|V|}{b^i}$ vertices. If a Dijkstra's search from each of the $m$ landmarks is used to compute the distances to the node's vertices, then this may take $O(b^im\frac{|V|}{b^i}\log\frac{|V|}{b^i})$ time for all nodes at level $i$. However, the Dijkstra's search may visit vertices outside the SUL-Tree node's subgraph, e.g., if a shortest path from a landmark to some subgraph vertex passes outside the subgraph. We consider $\gamma \times \frac{|V|}{b^i}$ to represent the total number of vertices visited by the Dijkstra search. We find that $\gamma$ is a small constant in practice (and can be further reduced as shown in Section~\ref{sec:optmizations}). Multiplying by the height, the time complexity for the whole tree is $O(m|V|\log^2{|V|})$. 

\begin{table}[htbp]
\centering
\begin{tabular}{|c|c|c|c|c|}  \hline
  \textbf{Complexity} & \textbf{ALT~\cite{goldberg2005alt}} & \textbf{SUL-Tree (Ours)} & \textbf{OL~\cite{abey2017knn}} & \textbf{COL-Tree (Ours)} \\ \hline
  \textbf{Space} & $O(m|V|)$ & $O(m|V|\log|V|)$ &  $O(m|P|)$ & $O(m|P|)$ \\ \hline
  \textbf{Time} & $O(m|V|\log|V|)$ &  $O(m|V|\log^2{|V|})$ & $O(m|P|\log|P|)$ & $O(m|P|\log|P|)$ \\ \hline
\end{tabular}
\caption{Comparison of space and time complexity for the proposed methods versus previous landmark-based indexes}
\label{tab:complexity}
\end{table}

Table~\ref{tab:complexity} shows the theoretical space and time complexity of our techniques compare favorably with existing landmark-based road network and object indexes, ALT and Object Lists (OL)\footnote{We derive the time complexity of OL as it was not listed in~\cite{abey2017knn} (which is dominated by sorting $m$ lists of $|P|$ objects)} (resp.), which do not support hierarchical object search. COL-Tree offers the same space and time as OL, while SUL-Tree's space and time cost only increases by a factor of $O(\log|V|)$ over ALT.

\section{Pre-Processing Optimizations}\label{sec:optmizations}
 
\subsection{Subgraph Vertex Ordering}\label{sec:optmizations:ordering}
In Section~\ref{sec:sultreeconstruction}, the top-down conversion of a SUL-Tree to a COL-Tree for object set $P$ requires the root-to-leaf propagation of each object $p \in P$. To do this efficiently, we must determine the child node $p$ belongs to in $O(1)$ time. We must also retrieve the distance from each landmark to $p$ from the $SDL$ of each relevant SUL-Tree node in $O(1)$ time. However, in practice, achieving $O(1)$ containment and distance retrieval has unfavorable implications for memory (both in terms of total usage and access latency). First, this would necessarily require storing the same vertex at each level of the SUL-Tree (i.e., each node/subgraph a vertex is associated with). Second, $O(1)$ operations require using highly memory inefficient data structures like hash-tables. While hash-tables are an elegant theoretical construct, actual implementation of hash-tables such as \texttt{std::unordered\_map} in C\texttt{++} take up significantly more memory than just its elements. Hash-tables also involve increased memory latency~\cite{abey2016knn}, which can often significantly reduce algorithm performance~\cite{sidlauskas2014impl}. To avoid such issues one could attempt to only store vertices in the leaf node of the SUL-Tree. With only implicit storage of vertices in non-leaf nodes, the propagation of object $p \in P$ would need to be performed in a bottom-up manner taking $O(log|V|)$ time. This would drastically affect COL-Tree construction time (Section~\ref{sec:complexity}), making it dependent on $|V|$ rather than the generally far smaller $|P|$. 

However, these issues can be entirely avoided by using a novel \textit{subgraph vertex ordering} which groups vertices of child subgraphs recursively as shown in Figure \ref{diag:stree_order} and numbering vertex IDs based on this order. By simply storing the first and last ID of vertices contained by a SUL-Tree node similar to an interval tree, we can achieve the necessary $O(1)$ complexity operations without resorting to highly memory inefficient data structures like hash-tables, which is problematic as discussed above. In fact, as an added benefit of the subgraph vertex ordering, this significantly reduces the space cost of the SUL-Tree (i.e., more than just memory usage). The subgraph vertex IDs do not need to be stored \textit{at all} (the vertex IDs can be inferred from the $SDL$ index). The distances can simply be stored in an array in vertex ID order with access to the distance for a specific vertex being achieved through simple arithmetic using the first vertex ID of the containing node. Although it does not change the complexity, for large-scale road networks, this can reduce real-world index size by gigabytes. We evaluate all improvements experimentally in Section~\ref{sec:exp:improvements}.

{
	\begin{figure}[htbp]
		\centering
		\includegraphics[width=0.6\linewidth]{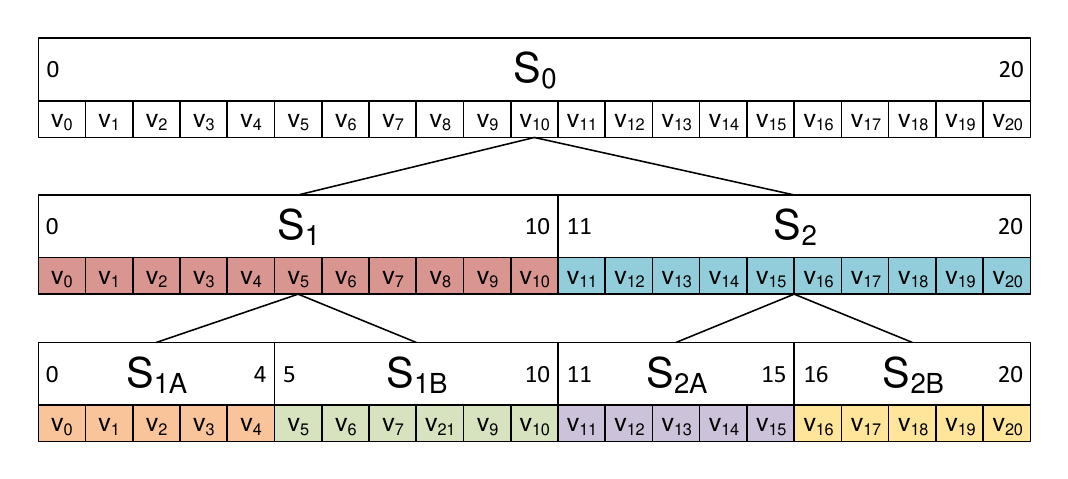}
		\caption{\small Hierarchical Subgraph Vertex Ordering}
		\label{diag:stree_order}
	\end{figure}
}

\smallHeadIndent{Improved Dijkstra Implementation. } Array-based data structures offer significantly better practical performance due to their locality of reference compared to hash-tables, enabling speedup through efficient utilization of CPU cache~\cite{abey2016knn}. As another benefit of replacing hash-tables, the subgraph vertex ordering allows substantial improvement in the implementation efficiency of SUL-Tree construction. Specifically, $SDL$ computation involves a restricted Dijkstra search which terminates when all subgraph vertices are settled. Checking whether a settled vertex is a subgraph vertex can be performed by simply checking if the vertex ID is in the range for that subgraph. Moreover, the distance itself can be stored directly in an array-based $SDL$. Construction time can be reduced by 33\% from simply avoiding hash-tables (Table~\ref{tab:improvements}).

\subsection{Subgraph Restricted Dijkstra Search}\label{sec:optmizations:search}
The most intensive part of SUL-Tree construction is the creation of $SDL$s for each SUL-Tree node $n_T$. This involves a Dijkstra search from each of the $m$ landmarks to find the shortest path distances to all vertices contained in the subgraph associated with $n_T$. Although the recursive partitioning step ensures subgraphs are disjoint vertex sets, there is still potential for repeated work as there is no guarantee that shortest paths will \textit{only} involve subgraphs vertices. It is possible that a shortest path from a landmark to a subgraph vertex leaves and re-enters the subgraph. To ensure correct distances, the Dijkstra search cannot terminate until all subgraph vertices are settled. In Section~\ref{sec:complexity}, we used $\gamma$ to represent the average factor by which subgraph Dijkstra search exceeded its number of vertices. Ideally, $\gamma=1$, meaning only subgraph vertices were settled. While it is small in practice ($\gamma \approx 6.6$), it can be further reduced using a novel variant of Dijkstra inspired by COL-Tree search, as we describe next.

{
	\begin{figure}[htbp]
		\centering
		\includegraphics[width=0.4\linewidth]{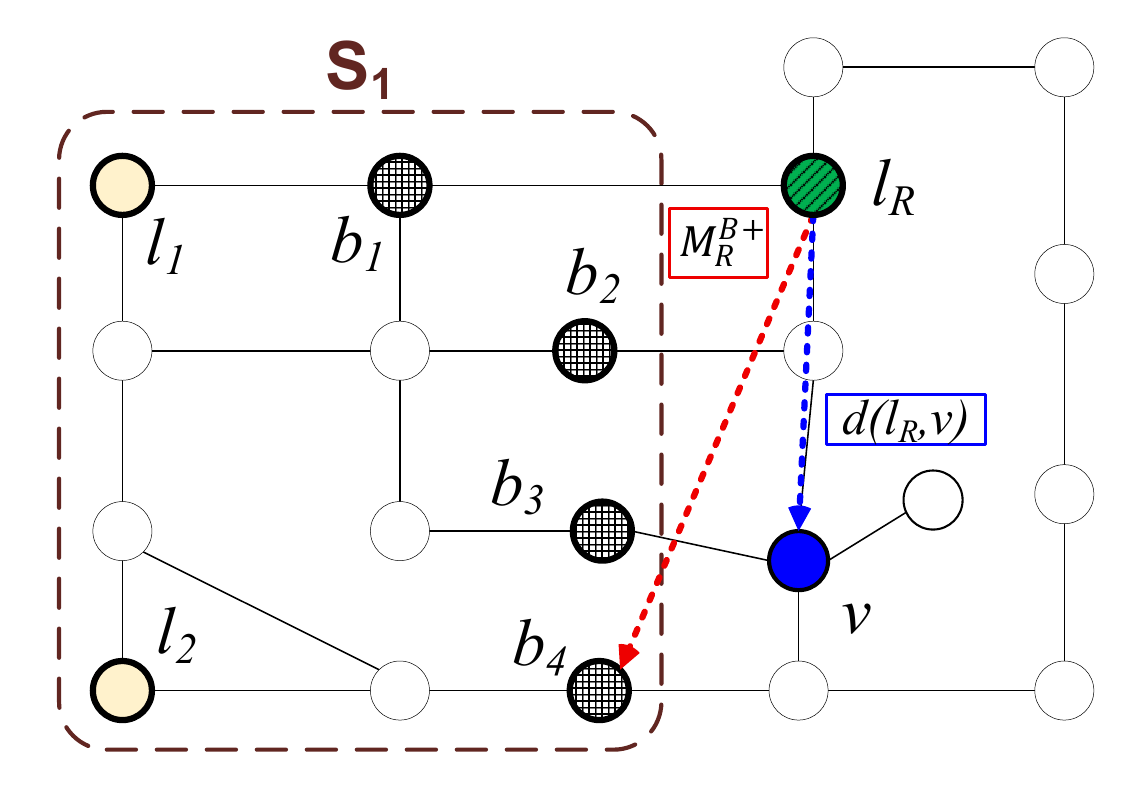}
		\caption{{\small Lower-Bound Distance to Subgraph Borders}}
		\label{diag:stree_border_set}
	\end{figure}
}
\smallHeadIndent{Border Set Lower-Bounds: } We observe that the COL-Tree object search mechanism can be used to improve Dijkstra subgraph search as well. First, we ensure that SUL-Tree nodes are processed in partitioning order, so that root node $SDL$s can be accessed during $SDL$ computation for any descendant node. After a subgraph is generated from the partitioning step, we identify the border vertex set $B$, which contains the subgraph vertices that are adjacent to a non-subgraph (i.e., external) vertex, e.g., $B = \{b_1,b_2,b_3,b_4\}$ in Figure~\ref{diag:stree_border_set}. We also keep track of the max $M^{B+}_R$ and min $M^{B-}_R$ distance from each root landmark $l_R$ to any border vertex by retrieving landmark-border distances from the $SDL$ of the SUL-Tree root node. 

The optimization uses the fact that any shortest path between subgraph vertices through an external vertex must re-enter the subgraph through some border vertex. Thus, we can compute a lower-bound from an external vertex $v$ to the subgraph by using $M^{B+}_R$ and $M^{B-}_R$ for root landmark $l_R$ using Eq.~(\ref{eq:nodellb}) and choose the best lower-bound by Eq.~(\ref{eq:nodellbmax}). This is similar to the way COL-Trees are used to compute lower-bounds to objects contained in a subgraph (in this case, the borders are essentially the objects). $d(l_R,v)$ for use in Eq.~(\ref{eq:nodellb}) is already available and can be retrieved in $O(1)$ time from the root $SDL_R$ for $l_R$.  For example, in Figure~\ref{diag:stree_border_set}, the shortest path from subgraph landmark $l_2$ to some subgraph vertices may leave through $b_4$, pass through vertex $v$, and re-enter via $b_3$. A lower-bound from $v$ to any border of subgraph $S_1$ can be computed using already available $d(l_R,v)$ and $M^{B+}_R$ for root landmark $l_R$, potentially pruning the Dijkstra search at $v$. Note that the heuristic may be inconsistent, as vertex $v$ further from source $l$ may have a smaller lower-bound $d(l,v)+LB(v,B)$ to any subgraph vertex. While still admissible (as a lower-bound on the true distance), we employ the widely used pathmax heuristic~\cite{mero1984pathmax} to ensure the lower-bound to the subgraph $d(l,v)+LB(v,B)$ does not decrease, allowing use of faster priority queues such as radix heaps. This is still correct because $d(l,v)+LB(v,B) \geq d(l,u)+LB(u,B)$ must be true if the shortest path to $v$ goes via $u$, i.e., there cannot be a strictly shorter path back to the subgraph via $u$ by moving away from it.

We can similarly compute an upper-bound from the source $l$ to any border vertex using Eq.~(\ref{eq:nodellbmax}) and prune the search at external vertex $v$ if $d(l,v)+LB(v,B) > UB(l,B)$. However, this upper-bound may be loose if all root landmarks are far from the subgraph, therefore we supplement it by maintaining an upper-bound on the distance between any two subgraph vertices while computing $SDL$s (similar to graph diameter, it is 2\texttimes~the maximum distance from subgraph source $l$ to any node).



\section{Experiments}\label{exp}

\subsection{Experimental Settings}\label{exp:settings}

\smallHead{Environment:} 
We conduct experiments on a PC with a AMD Ryzen 7 9700X CPU with 64GB of RAM running Ubuntu 24.04 LTS (64-bit). Out of an abundance of caution, we disable CPU features such as SMT and PBO to avoid any possible inconsistencies, e.g., due to unexpected changes in clock-speed. Code was written in C\texttt{++} and compiled by g\texttt{++} v13.3.0 with O3 flag and is publicly available as a Bitbucket repository\footnote{\url{https://bitbucket.org/tenindra/rn-obj-search}}. All experiments were conducted using memory-resident indexes for fast query processing.

\smallHead{Datasets:} 
We evaluated methods on a real large-scale road network for the continental US with $23,947,347$ vertices and $57,708,624$ travel time edges combined with sets of locations for $8$ types of POIs as listed in Table \ref{tab:pois}. These are curated versions of a road network obtained from the 9th DIMACS Challenge\footnote{\url{http://www.dis.uniroma1.it/\%7Echallenge9/}} and POI locations obtained from OSM\footnote{\url{http://www.openstreetmap.org}}, and have been made available via Zenodo~\cite{abey2025rndataset}. For sensitivity analysis, we generate synthetic object (POI) sets from vertices chosen uniformly at random for density $d{=}\frac{|P|}{|V|}$.

\begin{table}
	{
		\begin{minipage}{0.49\linewidth}
    		\centering
    		\begin{tabular}{|c|c|c|} \hline
    			\textbf{Object Set} & \textbf{Size} & \textbf{Density} \\ \hline
    			Schools & 160,525 & 0.007 \\ \hline
    			Parks & 69,338 & 0.003 \\ \hline
    			Fast Food & 25,069 & 0.001  \\ \hline
    			Post Offices & 21,319 & 0.0009 \\ \hline
    			Hospitals & 11,417 & 0.0005 \\ \hline
    			Hotels & 8,742 & 0.0004 \\ \hline
    			Universities & 3,954 & 0.0002 \\ \hline
    			Courthouses & 2,161 & 0.00009 \\ \hline
    		\end{tabular}
    		\caption{Real-World POIs Sets for US Road Network}\label{tab:pois}
		\end{minipage}
		\begin{minipage}{0.49\linewidth}
    		\centering
    		\begin{tabular}{|l|l|} \hline
    			\textbf{Variable} & \textbf{Values} \\ \hline 
    			$k$ & 1, 5, \textbf{10}, 25, 50 \\ \hline
    			$d$ & 1, 0.1, 0.01, \textbf{0.001}, 0.0001 \\ \hline
    			A (\%) & 1, 5, \textbf{15}, 50, 100 \\ \hline
    			$|Q|$ & 2, 4, \textbf{8}, 16, 32 \\ \hline
    			Agg. Func. & \textit{sum}, \textbf{\textit{max}} \\ \hline
    			$r$ (\% of $D$) & 0.5, 1, \textbf{2.5}, 5, 10 \\ \hline
    		\end{tabular}
    		\caption{Experimental Variables (defaults in \textbf{bold})}\label{tab:exp:variables}
		\end{minipage}
	}
\end{table}

\smallHead{Experimental Variables:} Table \ref{tab:exp:variables} lists all variables evaluated (bold indicates the default value). For A$k$NN and $k$FN queries, we perform sensitivity analysis on query performance by varying the number of results $k$ and object set density $d$. For A$k$NN queries we additionally evaluate varying number of query vertices $|Q|$, aggregate functions, and variable $A$ representing how ``local'' the set of query locations are to each other. Similar to past studies~\cite{yiu2006aknn}, $A$ refers to a connected subgraph of the road network $G$ with $A$\% of the total vertices $|V|$, from which $|Q|$ random query vertices are chosen. Note that we set a low default of $A{=}15$\% to ensure fairness, as our method is likely to perform better when query locations are further apart with higher $A$. Query vertices are chosen uniformly at random for other queries as in previous studies~\cite{abey2016knn,zhong2015gtree}. For range queries, in addition to $d$, we evaluate radius $r$ expressed as a percentage of the graph diameter $D$ of the road network (which we approximate by the double-sweep method).

\smallHead{Performance Metrics:} We report average running time of $1000$ sequentially executed queries. For sensitivity analysis, this consists of $20$ object sets and $50$ query vertices ($k$FN and range) or vertex sets (A$k$NN). In addition to query time, we measure heuristic efficiency using two machine-independent metrics, the total numbers of (a) network distances computed to objects; and (b) candidate objects retrieved from the data structure. The former measures how accurate the lower-bounds are at avoiding expensive network distance computations to objects that are not results and the latter measures how effective each data structure is at implicitly pruning non-result objects (i.e., by never even considering them individually).

\smallHead{Competitors:} 
We compare the performance of each query algorithm that utilizes COL-Trees against state-of-the-art heuristics for POI search including the hierarchical search-based \textit{Incremental Euclidean Restriction} (IER)~\cite{papadias2003ine} and expansion-based \textit{Network Voronoi Diagrams} (NVD). IER has been applied to solve A$k$NN queries~\cite{yiu2006aknn}. We significantly improve existing NVD-based A$k$NN search~\cite{zhu2010vn3aknn} by adapting Incremental Lower-Bound Restriction~\cite{abey2017knn} that uses ALT to avoid expensive network distance computation. Note that IER cannot solve $k$FN queries and NVD will always evaluate all objects (see Section~\ref{query:fkn}), so we implemented an optimized brute-force baseline, AUB-PHL, that evaluates all objects but filters candidates by ALT upper-bounds to reduce network distance computation. Comparisons for each query are listed in Table~\ref{tab:techniques}. Note that while COL-Trees are not designed with $k$NN queries in mind, we provide experimental results in~\ref{app:knn} as there are some cases it can be highly useful (e.g., where NVD indexing cost cannot be amortized).

\begin{table}[htbp]
\centering
\begin{tabular}{|c|c|c|c|c|}  \hline
  Technique & A$k$NN & $k$FN & Range & $k$NN \\ \hline
  COLT-PHL (Ours) & \ding{51} & \ding{51} & \ding{51} & \ding{51} \\ \hline
  IER-PHL~\cite{yiu2006aknn} & \ding{51} & \ding{55} & \ding{51} & \ding{51} \\ \hline
  NVD-PHL~\cite{abey2017knn} & \ding{51} & \ding{55} & \ding{51} & \ding{51} \\ \hline
  AUB-PHL & \ding{55} & \ding{51} & \ding{55} & \ding{55} \\ \hline
\end{tabular}
\caption{Compared Methods for Each Query}
\label{tab:techniques}
\end{table}

Each technique uses Pruned Highway Labeling (PHL)~\cite{kawata2014phl} for exact network distance computation, hence techniques are named IER-PHL, NVD-PHL, COLT-PHL, etc. Using the same network distance technique ensures a level playing field. Furthermore, as PHL is one of the fastest techniques and network distance is the most costly subroutine, it minimizes the bottleneck and better exposes overheads associated with each technique. NVD-based methods use the ALT index~\cite{goldberg2005alt} with $m{=}16$ random landmarks to compute LLBs and NVDs are represented as linear quadtrees~\cite{demiryurek2012nvd,abey2020kspin} for reduced space cost and improved memory locality. We also utilize the same ALT index in query algorithms based on COL-Tree to compute point-to-point lower- and upper-bounds, as it is effectively a SUL-Tree root node, for simplicity and a means of like-for-like comparison. We select COL-Tree parameters for each query by parameter testing detailed in Section~\ref{sec:exp:parametertesting}.

\subsection{A\texorpdfstring{$k$}{k}NN Queries}\label{exp:aknn}
Figure \ref{exp:aknn:rw} depicts A$k$NN query time on the real-world POI datasets, with the number of objects increasing from left to right. COLT-PHL significantly outperforms the other methods across the board, with up to 2 orders of magnitude improvement. COLT-PHL tends to improve more on larger POI sets, where it is more difficult to distinguish between objects. Next, we perform sensitivity analysis to delve deeper into this and other nuances of query performance.

\begin{figure}[t]
    \centering
    \includegraphics[width=0.65\linewidth]{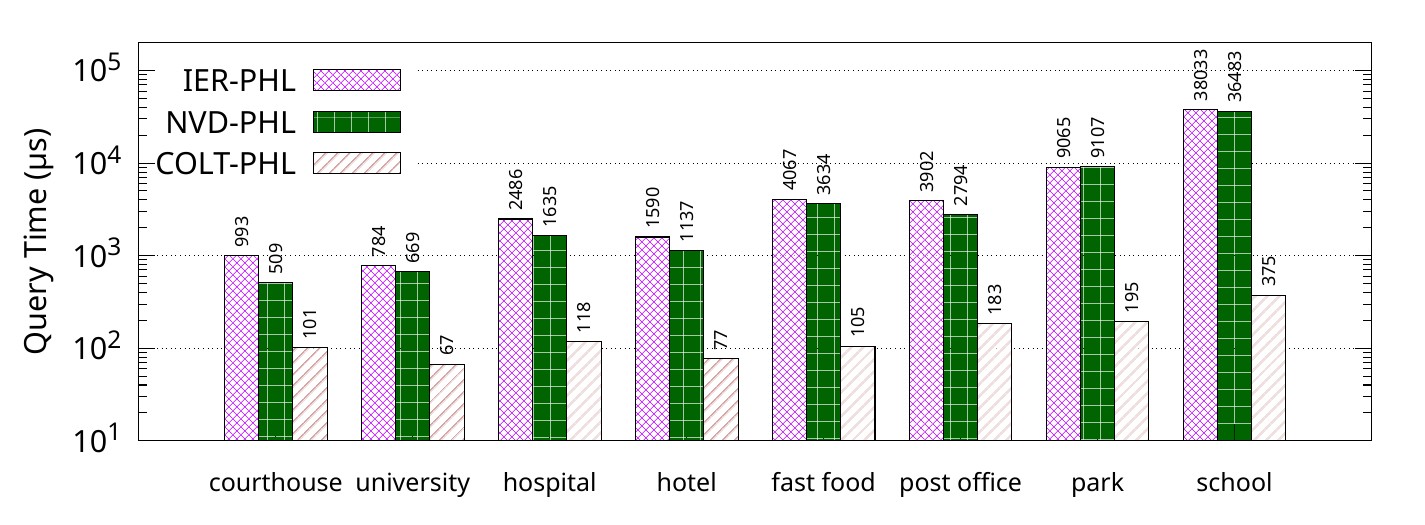}
    \caption{A$k$NN query performance on different real-world POI sets 
    }
    \label{exp:aknn:rw}
\end{figure}

\begin{figure}[htbp]
    \centering
    \begin{subfigure}[b]{0.24\linewidth}
        \includegraphics[width=\linewidth]{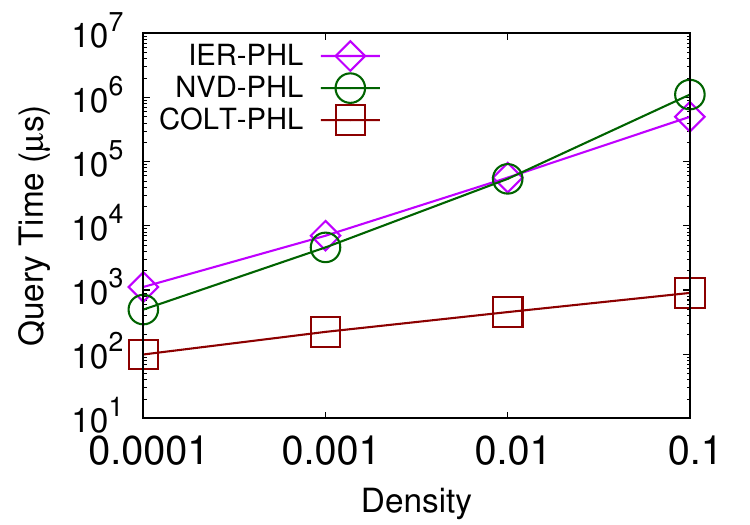}
        \caption{Varying $d$}
        \label{exp:aknn:max:time:varyd}
    \end{subfigure}
    \hfill
    \begin{subfigure}[b]{0.24\linewidth}
        \includegraphics[width=\linewidth]{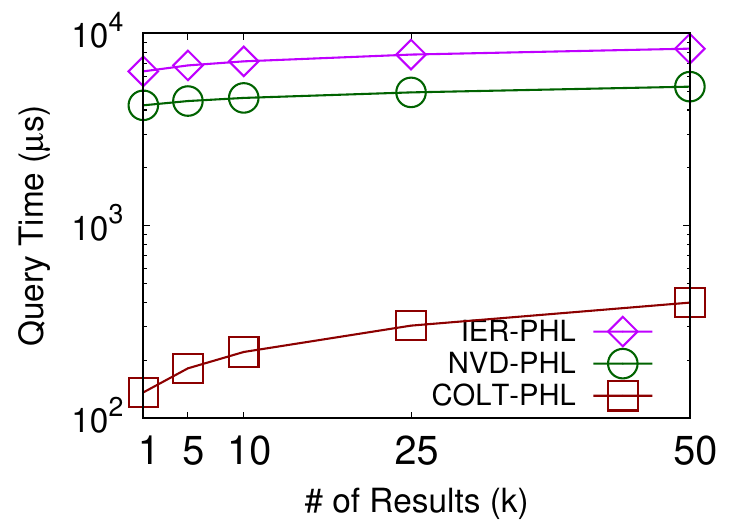}
        \caption{Varying $k$}
        \label{exp:aknn:max:time:varyk}
    \end{subfigure}
    \hfill
    \begin{subfigure}[b]{0.24\linewidth}
        \includegraphics[width=\linewidth]{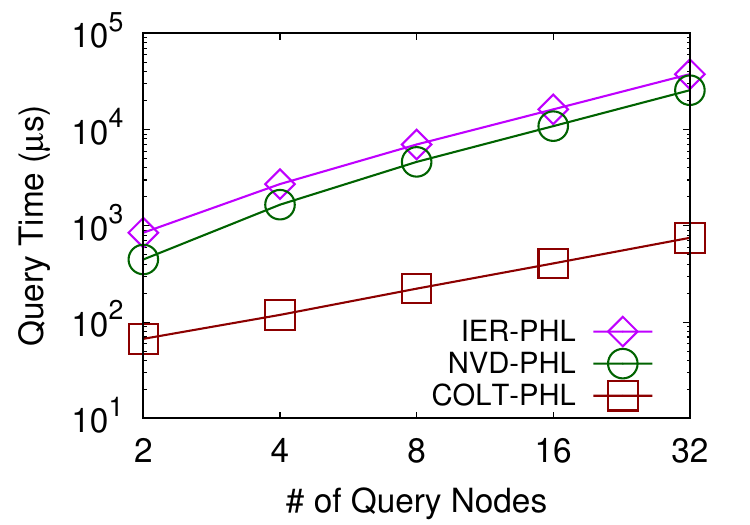}
        \caption{Varying $|Q|$}
        \label{exp:aknn:max:time:varyq}
    \end{subfigure}
    \hfill
    \begin{subfigure}[b]{0.24\linewidth}
        \includegraphics[width=\linewidth]{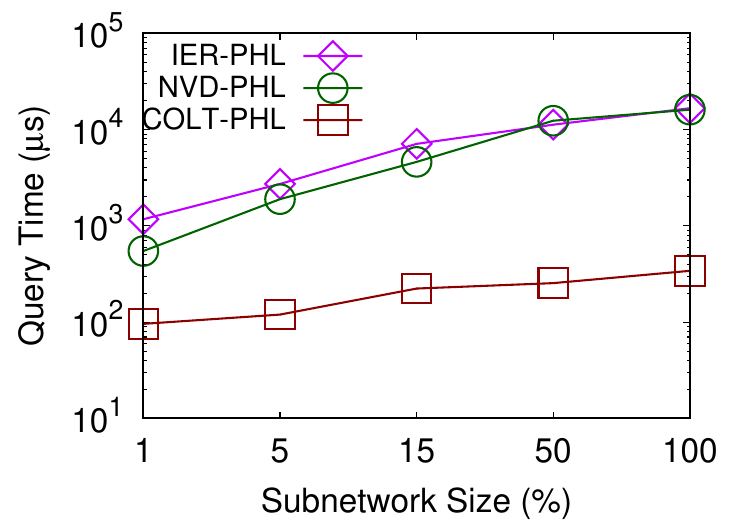}
        \caption{Varying $A$}
        \label{exp:aknn:max:time:varya}
    \end{subfigure}
	\caption{Performance for $max$ function 
	}
	\label{exp:aknn:max:time}
\end{figure}

\subsubsection{Sensitivity Analysis for A\texorpdfstring{$k$}{k}NN Queries}

\smallHead{Effect of $\boldsymbol{d}$:} The trend seen for real-world POIs is confirmed by increasing object density $d$ in Figure \ref{exp:aknn:max:time:varyd}. Both IER and NVD scale poorly with increasing $d$. With more objects, NVD-PHL must expand more adjacent objects to find a common candidate for the same query vertices, while IER-PHL finds it harder to distinguish objects using its less accurate Euclidean lower-bound. In contrast, COLT-PHL benefits from the tighter lower-bounds and hierarchical traversal afforded by the COL-Tree data structure, resulting in effective pruning of subgraphs and pinpointing likely candidates.  

\smallHead{Effect of $\boldsymbol{k}$:} COLT-PHL significantly outperforms the other methods for varying $k$ in Figure \ref{exp:aknn:max:time:varyk}. NVD-PHL and IER-PHL query times do not vary significantly compared to COLT-PHL (note the logarithmic scale). This suggests the same amount of work is done irrespective of $k$ and strongly implies the competing techniques cannot effectively identify good candidates and terminate quickly. For example, the fact that subsequent result objects are not much more costly to retrieve suggests the concurrent expansion in NVD-PHL is extremely large just to generate the first candidates.

\smallHead{Effect of $\boldsymbol{|Q|}$:} As the number of query vertices increases in Figure~\ref{exp:aknn:max:time:varyq}, additional lower-bounds and network distances to candidates will be computed, thus query time increases for all methods. However, COLT-PHL scales better due to the convexity-preserving property only requiring a single binary search on the COL-Tree's $ODL$s irrespective of the number of query vertices. On the other hand, NVDs require additional concurrent expansions as A$k$NN results are less likely to be close to any single query vertex. 

\smallHead{Effect of $\boldsymbol{A}$:} Recall that $A$ is the percentage of graph vertices in a subgraph from which query vertices $Q$ are chosen. With increasing $A$ query vertices become further apart, e.g., to represent a query by a logistics company placing depots nation-wide. For $A=1\%$, NVD-PHL is closer to COLT-PHL in Figure \ref{exp:aknn:max:time:varya}. This scenario resembles $k$NN queries where NVD-PHL is effective, as more query vertices share the same 1NN and concurrent expansions overlap earlier. IER does not benefit from this as its lower-bounds are still inaccurate. Queries become harder with increasing $A$ and COLT-PHL scales extremely well in that case.
\begin{figure}[htbp]
    \centering
    \begin{subfigure}[b]{0.24\linewidth}
        \includegraphics[width=\linewidth]{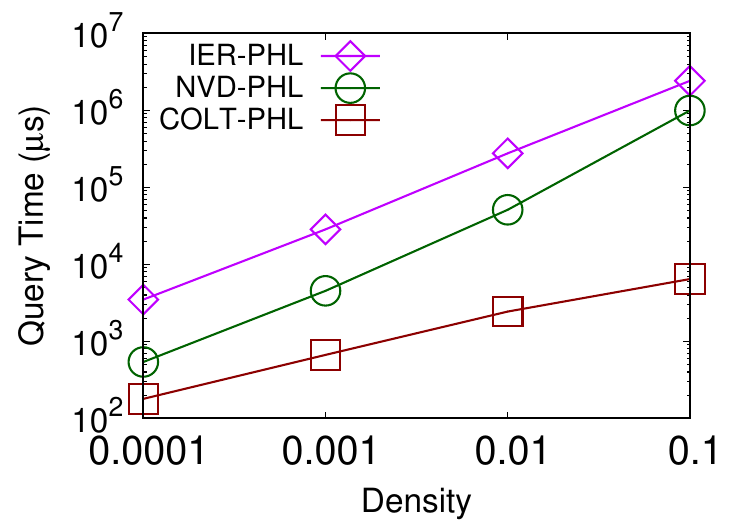}
        \caption{Varying $d$}
        \label{exp:aknn:sum:time:varyd}
    \end{subfigure}
    \hfill
    \begin{subfigure}[b]{0.24\linewidth}
        \includegraphics[width=\linewidth]{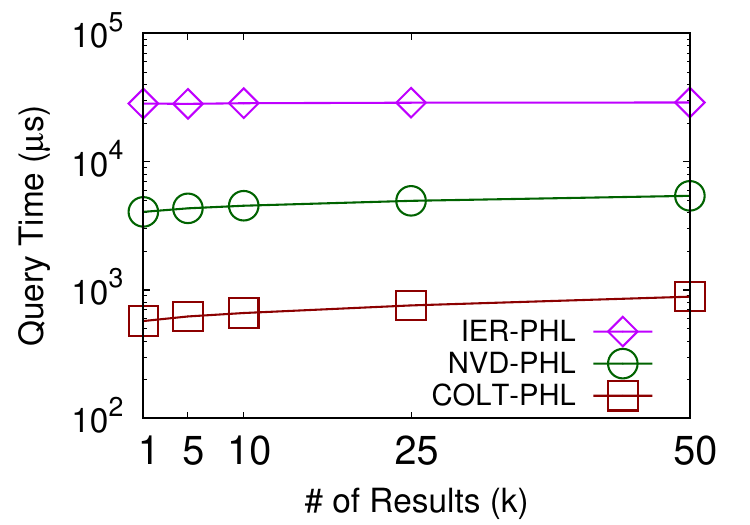}
        \caption{Varying $k$}
        \label{exp:aknn:sum:time:varyk}
    \end{subfigure}
    \hfill
    \begin{subfigure}[b]{0.24\linewidth}
        \includegraphics[width=\linewidth]{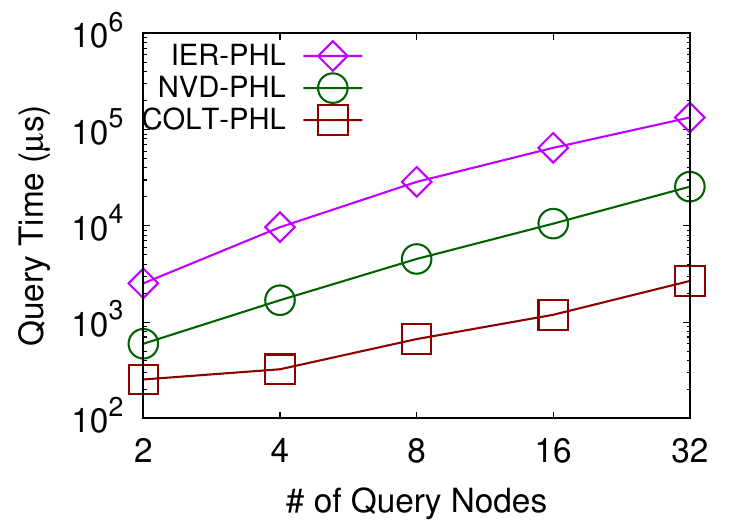}
        \caption{Varying $|Q|$}
        \label{exp:aknn:sum:time:varyq}
    \end{subfigure}
    \hfill
    \begin{subfigure}[b]{0.24\linewidth}
        \includegraphics[width=\linewidth]{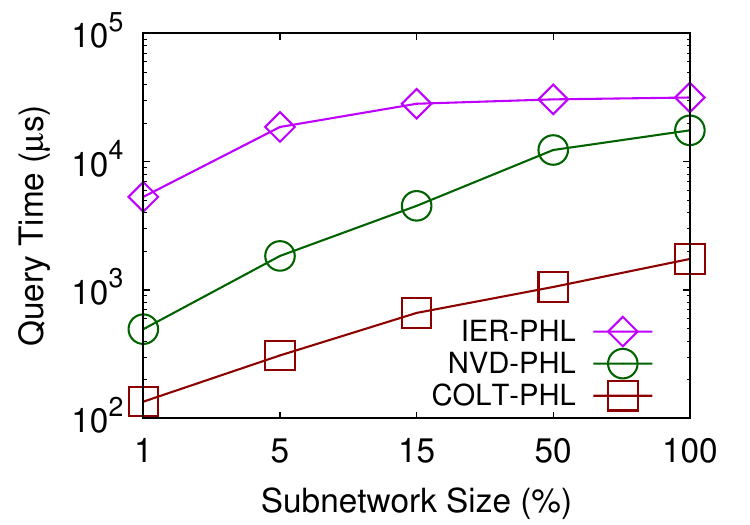}
        \caption{Varying $A$}
        \label{exp:aknn:sum:time:varya}
    \end{subfigure}
	\caption{Performance for $sum$ function 
	}
	\label{exp:aknn:sum:time}
\end{figure}

\smallHead{Effect of Aggregate Function:} We evaluate another popular aggregate function, $sum$, in Figure \ref{exp:aknn:sum:time}. Both IER-PHL and COLT-PHL have higher query times for $sum$ than $max$ because $sum$ also accumulates the error of the lower-bound. This effect is amplified by the hierarchical tree search used by both methods, explaining the relative improvement of NVD-PHL. However, tighter lower-bounds in the COL-Tree hierarchy make COLT-PHL more robust to this than IER. NVD-PHL efficiency is dominated by the concurrent expansions that must meet to generate candidates, which remains similar for $sum$. Regardless, COLT-PHL still significantly outperforms both methods. 

\smallHead{Heuristic Efficiency:} Figure \ref{exp:aknn:max:heur:varyqndist} and \ref{exp:aknn:sum:heur:varyqndist} shows the heuristic performance for $max$ and $sum$, respectively. The significantly higher number of network distances computed to objects in Figures \ref{exp:aknn:max:heur:varyqndist:fh} and \ref{exp:aknn:sum:heur:varyqndist:fh} explain the poor query time of IER. However, NVD-PHL does not compute many more network distances than COLT-PHL. As both methods use landmark lower-bounds (LLBs), that are more accurate than IER's Euclidean-based lower-bound, they both avoid computing network distances. The poor query time of NVD-PHL can be explained by Figures \ref{exp:aknn:max:heur:varyqndist:lb} and \ref{exp:aknn:sum:heur:varyqndist:lb}, which shows that NVD-PHL considers significantly more candidate objects. This is because the concurrent expansions from each query vertex must continue to retrieve candidates until they meet before even trying to compute network distances, which is not efficient for A$k$NN search. Note that we exclude IER from Figure~\ref{exp:aknn:max:heur:varyqndist:lb} (and other figures for candidates retrieved) because the metric is designed to differentiate the landmark-based techniques, NVD-PHL and COLT-PHL.

\begin{figure}[htbp]
    \begin{minipage}[t]{.49\linewidth}
        \centering
        \subfloat[Network Distances\label{exp:aknn:max:heur:varyqndist:fh}]{\includegraphics[width=0.5\linewidth]{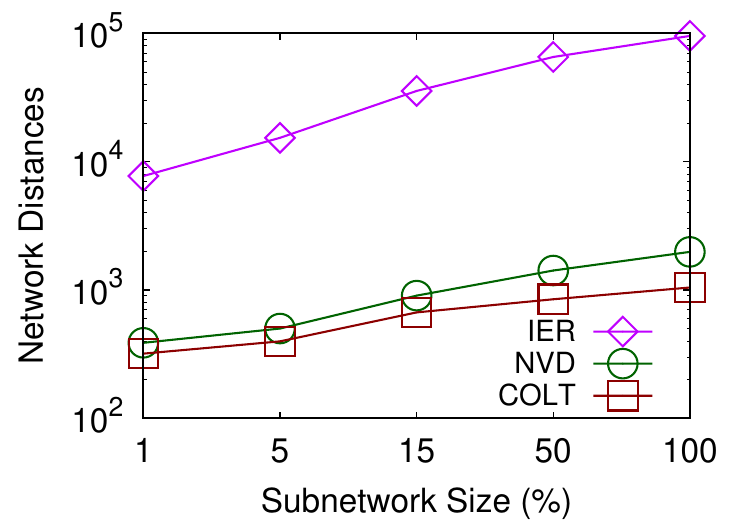}}
        \hfill
        \subfloat[Candidates Retrieved\label{exp:aknn:max:heur:varyqndist:lb}]{\includegraphics[width=0.5\linewidth]{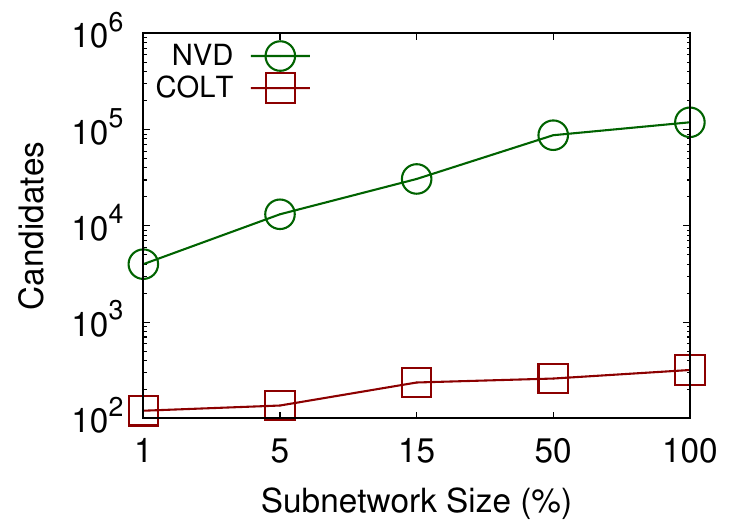}}
        \caption{Heuristic Performance ($max) $
        }
        \label{exp:aknn:max:heur:varyqndist}
    \end{minipage}%
    \hfill
    \begin{minipage}[t]{.49\textwidth}
        \centering
        \subfloat[Network Distances\label{exp:aknn:sum:heur:varyqndist:fh}]{\includegraphics[width=0.5\linewidth]{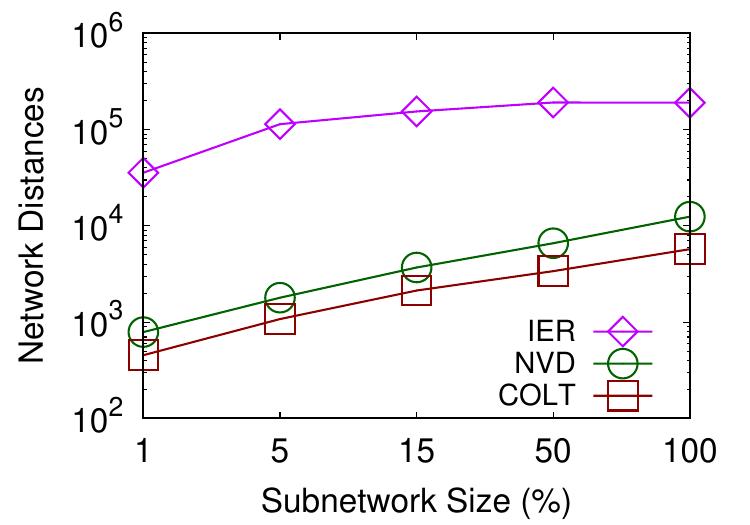}}
        \hfill
        \subfloat[Candidates Retrieved\label{exp:aknn:sum:heur:varyqndist:lb}]{\includegraphics[width=0.5\linewidth]{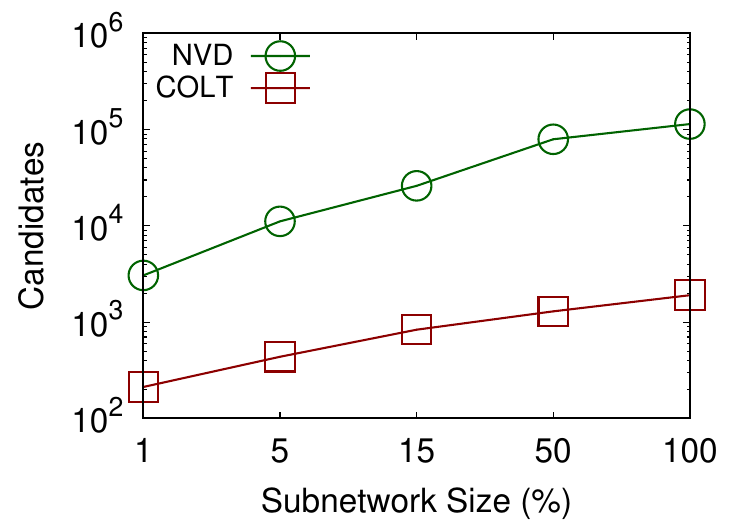}}
        \caption{Heuristic Performance ($sum$) 
        }
        \label{exp:aknn:sum:heur:varyqndist}
    \end{minipage}
\end{figure}

\subsection{\texorpdfstring{$k$}{k}FN Queries}\label{exp:kfn}
The effectiveness of COLT-PHL for $k$FN queries on real-world POI datasets is shown in Figure \ref{exp:fkn:rw}. Recall that IER-PHL cannot be used to solve $k$FN at all and concurrent expansion using NVDs can be no better than evaluating all objects, thus we can only compare to an optimized brute-force baseline, AUB-PHL. Consequently, the COLT-PHL maximum speedup is even greater than A$k$NNs, being up to 330\texttimes~faster.

{
	\begin{figure}[htbp]
		\centering
		\includegraphics[width=0.65\linewidth]{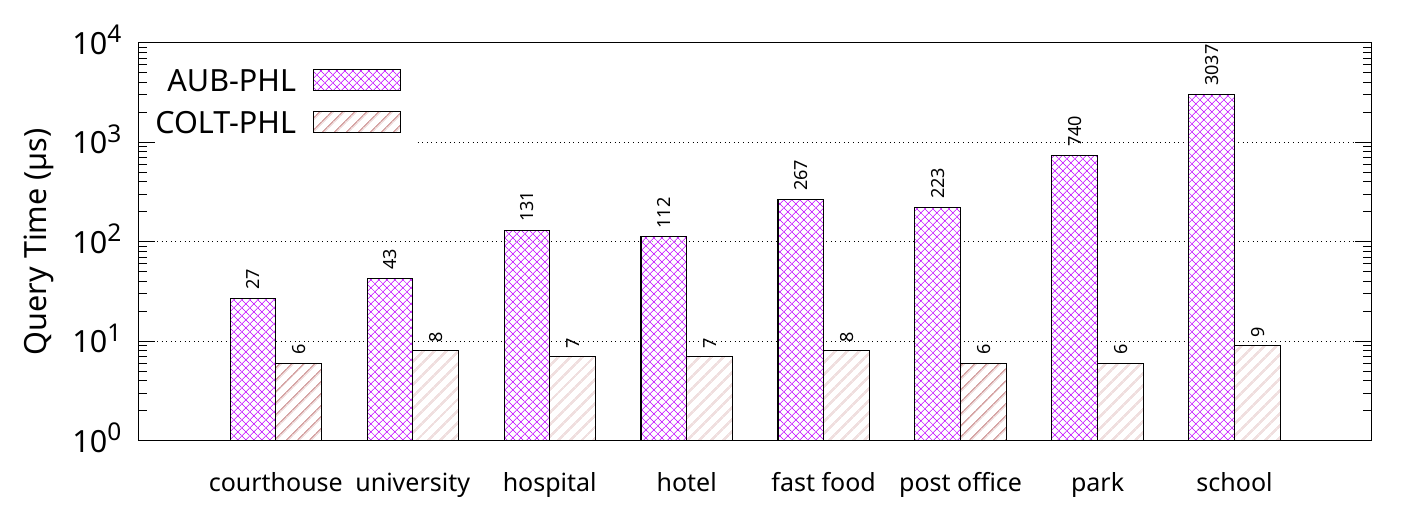}
		\caption{$k$FN query performance on different real-world POI sets  
        }
        \label{exp:fkn:rw}
	\end{figure}
}

\subsubsection{Sensitivity Analysis for \texorpdfstring{$k$}{k}FN Queries}

\smallHead{Effect of $\boldsymbol{d}$:} COLT-PHL remains largely unaffected by varying density $d$ in Figure~\ref{exp:fkn:varyd}. This indicates that the pruning ability is highly effective. With increasing density, the resulting COL-Tree will simply have a deeper hierarchy and effective bounds will prune the additional branches and thereby levels. Since AUB-PHL must evaluate all objects, it is only competitive when object sets are extremely small.

\smallHead{Effect of $\boldsymbol{k}$:} Given that it must inspect all objects, AUB-PHL must essentially do the same amount of work irrespective of the number of results needed, as evident with increasing $k$ in Figure~\ref{exp:fkn:varyk}. On the other hand, COLT-PHL scales more reasonably with increasing $k$. One noteworthy observation is the comparison between query times for $k$FN and its opposite, the $k$NN query. $k$FN query times for COLT-PHL are faster than $k$NN (Figure~\ref{exp:knn:varyk}) despite using the same data structure in much the same way. The difference is that $k$FN traversal is by upper-upper instead of lower-bound, suggesting the former is more effective. 

\smallHead{Heuristic Efficiency:} The extremely low number of network distances in Figure~\ref{exp:fkn:heur:varyd:fh} suggests that COL-Tree upper-bounds are highly accurate, and mostly retrieve candidates that are one of the $k$FN results. Upper-bounds are known to be much closer to the true network distance~\cite{potamias2009distest} and being able to prune based on them shows that COL-Trees are highly suited for $k$FN queries. The uptick for higher densities is due to objects being far closer together, which reduces the margin of error, affecting even accurate upper-bounds. This is confirmed in Figure~\ref{exp:fkn:heur:varyd:lb}, because the number of candidates does not similarly increase. 

\begin{figure}[h]
    \centering
    \begin{subfigure}[b]{0.24\linewidth}
        \includegraphics[width=\linewidth]{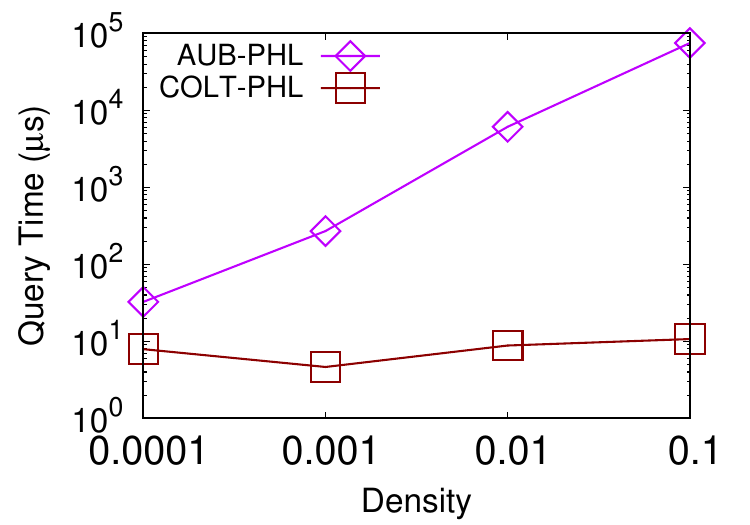}
        \caption{Varying $d$}
        \label{exp:fkn:varyd}
    \end{subfigure}
    \hfill
    \begin{subfigure}[b]{0.24\linewidth}
        \includegraphics[width=\linewidth]{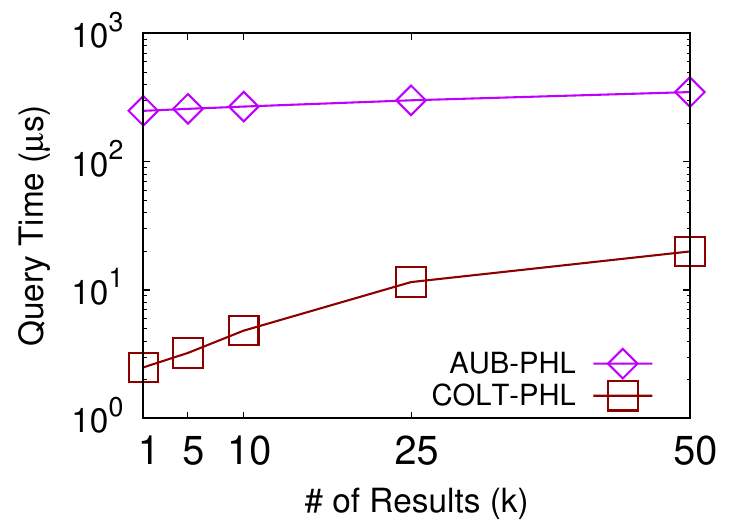}
        \caption{Varying $k$}
        \label{exp:fkn:varyk}
    \end{subfigure}
    \hfill
    \begin{subfigure}[b]{0.24\linewidth}
        \includegraphics[width=\linewidth]{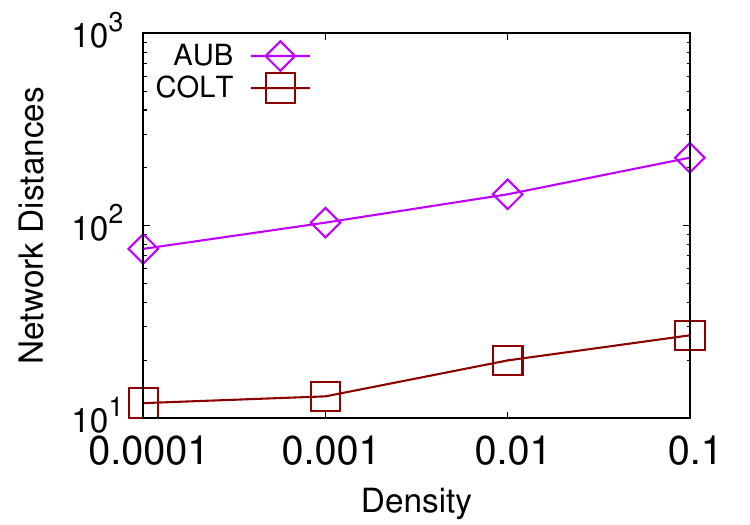}
        \caption{Network Distances}
        \label{exp:fkn:heur:varyd:fh}
    \end{subfigure}
    \hfill
    \begin{subfigure}[b]{0.24\linewidth}
        \includegraphics[width=\linewidth]{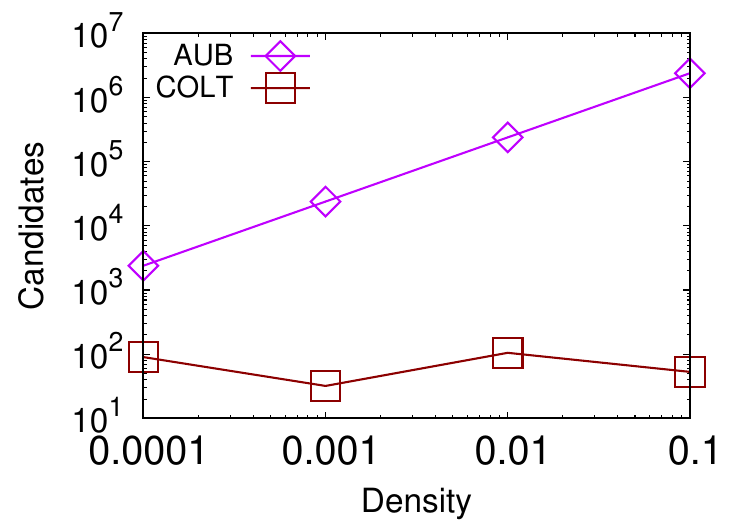}
        \caption{Candidates Retrieved}
        \label{exp:fkn:heur:varyd:lb}
    \end{subfigure}
	\caption{Performance for $k$FN Queries 
	}
	\label{exp:fkn}
\end{figure}

\subsection{Range Queries}\label{sec:exp:range}
COLT-PHL range query times are surprisingly on par with NVD-PHL with the smallest object sets in Figure~\ref{exp:range:rw}, gradually improving with increasing size. On the largest set it achieves a significant $2.7$\texttimes~improvement. Recall that the motivation for COL-Trees was based on POI search queries such as A$k$NN and $k$FN that are not aligned with the ``nearest first'' heuristics more suited for $k$NN and range queries. Intuitively, COLT-PHL should not work so well for range queries and we investigate why next. 

{
	\begin{figure}[h]
		\centering
		\includegraphics[width=0.65\linewidth]{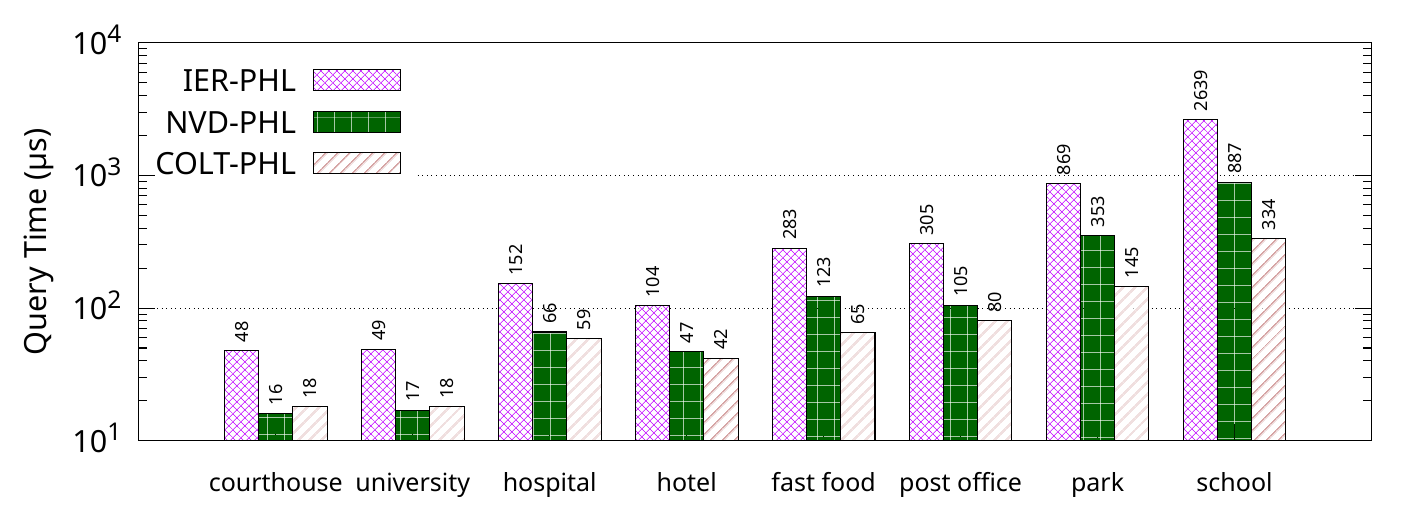}
		\caption{Range query performance on different real-world POI sets  
        }
        \label{exp:range:rw}
	\end{figure}
}

\subsubsection{Sensitivity Analysis for Range Queries}
For range search sensitivity analysis, we increase the default density to $d{=}0.01$ to make queries more challenging and retrieve more objects as results (our varying $d$ experiment still evaluates all densities).

\smallHead{Effect of $\boldsymbol{d}$:} 
The query time of COLT-PHL improves relative to NVD-PHL with greater density $d$ in Figure~\ref{exp:range:varyd}. With increasing density, more objects fall into the same range, and COLT-PHL scales better as more objects can simply be identified as results or pruned using only the COL-Tree node lower- and upper-bounds to entire groups of objects (i.e., without retrieving individual objects from leaf node $ODL$s). COLT-PHL achieves significant improvement on higher densities. We must also take into account that building an NVD index is more than 1500\texttimes~slower than a COL-Tree (Table~\ref{tab:index}), making methods based on COL-Trees clearly preferable. 

\smallHead{Effect of $\boldsymbol{r}$:} 
COLT-PHL also improves relative to NVD-PHL with increasing radius $r$ in Figure~\ref{exp:range:varyr}. Similar to the case with increasing $d$, increasing $r$ leads to more objects being evaluated and returned as results, which COL-Tree traversal can handle well. The scalability of COLT-PHL also likely benefits from two advantages with increasing $r$. First, range search within COL-Trees can be implemented by a stack while NVD-PHL must use a priority queue (such as a binary heap). Intuitively, similar to increasing a circle's radius, a modest increase in range can dramatically increase the area covered and the potential number of objects within range. While not directly comparable, stacks will scale better for the corresponding significant increase in data structure operations than priority queues. Second, as already discussed in relation to $k$FN queries, we know that upper-bound pruning is extremely effective and increasing $r$ means more objects fall within it and can be immediately confirmed as results by upper-bounds to entire groups of objects.

\smallHead{Heuristic Efficiency:} 
Range queries do not strictly require computation of network distance to each object, so as long as we correctly determine they are within range $r$. Figure~\ref{exp:range:heur:varyd:fh} shows that COLT-PHL is able to compute fewer network distances to objects with increasing $d$, suggesting that its upper-bounds are indeed working as more objects fall within the same range, as described above. Moreover, COLT-PHL is more agreeable with increasing $d$ compared to queries that retrieve $k$ results, since the order is not relevant for range queries. Figure~\ref{exp:range:heur:varyd:lb} shows that COLT-PHL scales better with increasing $d$ on the number of candidates considered (i.e., objects that can neither be implicitly pruned or immediately reported as results), eventually overtaking NVD-PHL. Such behavior would penalize NVD-PHL's priority queue more severely. 

\begin{figure}[h]
    \centering
    \begin{subfigure}[b]{0.24\linewidth}
        \includegraphics[width=\linewidth]{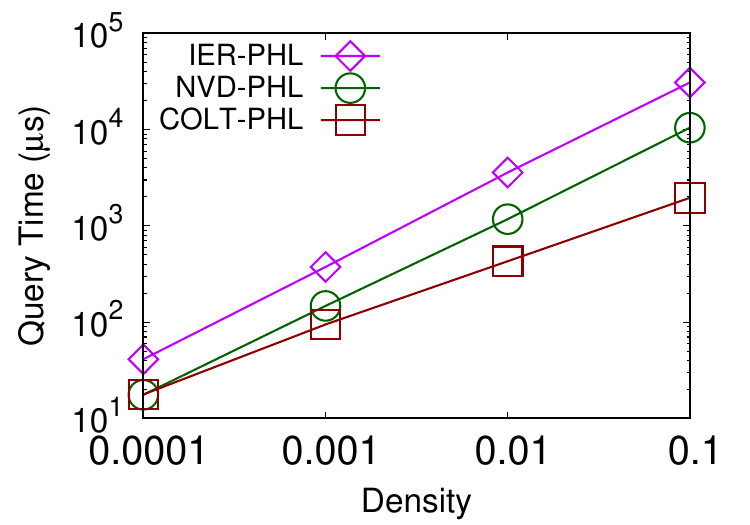}
        \caption{Varying $d$}
        \label{exp:range:varyd}
    \end{subfigure}
    \hfill
    \begin{subfigure}[b]{0.24\linewidth}
        \includegraphics[width=\linewidth]{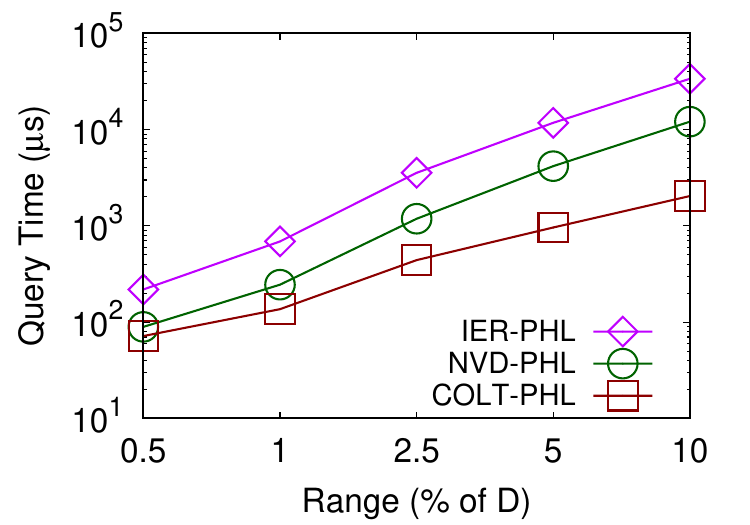}
        \caption{Varying $r$}
        \label{exp:range:varyr}
    \end{subfigure}
    \hfill
    \begin{subfigure}[b]{0.24\linewidth}
        \includegraphics[width=\linewidth]{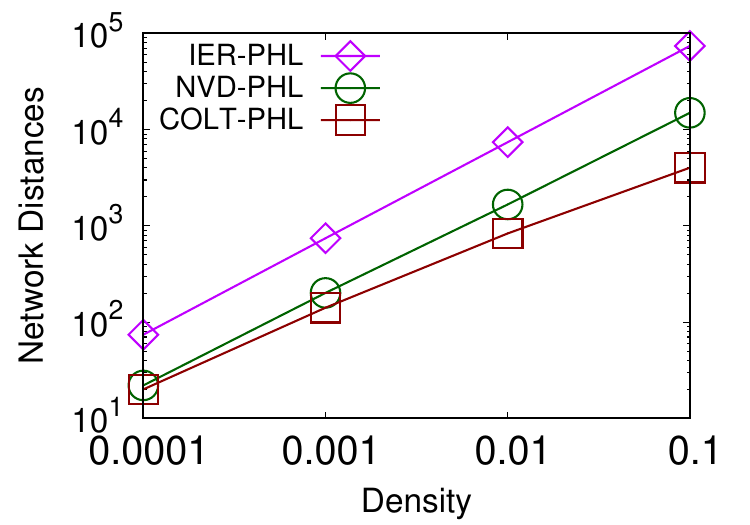}
        \caption{Network Distances}
        \label{exp:range:heur:varyd:fh}
    \end{subfigure}
    \hfill
    \begin{subfigure}[b]{0.24\linewidth}
        \includegraphics[width=\linewidth]{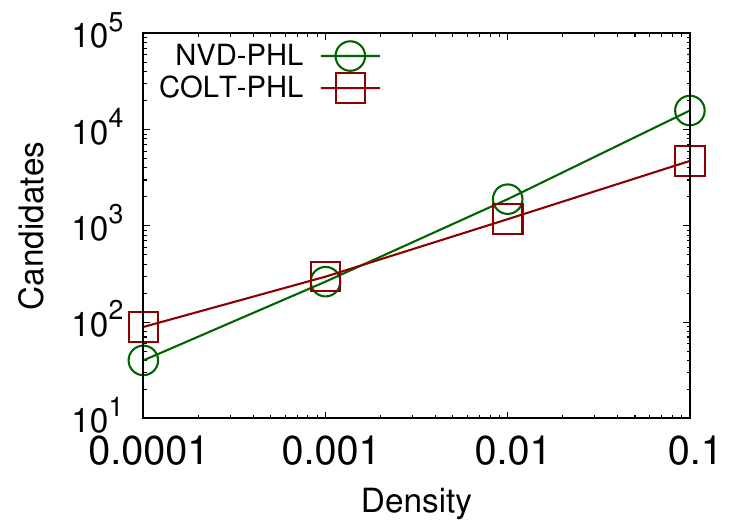}
        \caption{Candidates Retrieved}
        \label{exp:range:heur:varyd:lb}
    \end{subfigure}
	\caption{Performance for Range Queries 
	}
	\label{exp:range}
\end{figure}

\subsection{Pre-Processing Performance}
Table \ref{tab:index} details pre-processing costs in terms of time and space. Naturally, indexes for the road network are more costly. SUL-Tree indexing time only represents a small overhead over the other landmark-based index, ALT, aligning with our theoretical analysis in Section~\ref{sec:complexity}. Index size is even lower, as fewer (but more localized) landmarks are need to support search, offsetting the additional $O(\log|V|)$ compared to ALT. Moreover, the indexing time of SUL-Tree is still extremely low and significantly lower than PHL, which possesses one of the fastest pre-processing times for high-performance road network indexes~\cite{kawata2014phl}. ALT is constructed for $m{=}16$ landmarks and SUL-Tree is based on $m{=}2$, $b{=}8$, and $\alpha{=}1024$.

Object index (COL-Tree, NVD, R-tree) statistics are reported for the default density $0.001\times|V|$. While these costs seem trivial it is still important as there may be many object sets for different types of POIs like those in Table~\ref{tab:pois}. Moreover, the pre-processing time determines how easily a technique can be used in real-time settings. For example, in a ride-hailing application, the objects are moving and object indexes must be rebuilt frequently~\cite{abey2021ikm}. COL-Tree is significantly smaller and faster to construct than NVDs and is comparable to R-trees as both possess the same asymptotic behavior. Statistics are reported for an R-tree with branching factor $32$ and COL-Tree with $m{=}2$, $b{=}8$, and $\lambda{=}256$.

Note that for SUL-Trees, in theory, we require $\alpha{=}\lambda$ in case all vertices in a SUL-Tree leaf node are objects. However, in practice, we find that we can safely set $\alpha{=}4{\times}\lambda$ without affecting COL-Tree construction at all, even for the largest object sets ($d{=}0.1$). This reduces SUL-Tree construction time by avoiding computing $SDLs$ that are never used. If a case where smaller $\alpha$ is needed occurs, we can simply handle it on the fly via $SDL$ computation on the corresponding (unprocessed) SUL-Tree node.

\begin{table}[h]
	{
		\begin{minipage}{0.35\linewidth}
			\centering
			\begin{tabular}{|c|c|c|} \hline
				\textbf{Index} & \textbf{Time} & \textbf{Space} \\ \hline
				  \textit{SUL-Tree} & \textit{2m19s} & \textit{1.15GB}  \\ \hline
				  {ALT} & 39s & 1.46GB  \\ \hline
				  {PHL} & 14m17s & 16.2GB \\ \hline
				  \textit{COL-Tree} & \textit{3.4ms} & \textit{0.5MB} \\ \hline
				  {R-tree} & 2.7ms & 0.9MB \\ \hline
				  {NVD} & 5392ms & 28MB \\ \hline
			\end{tabular}
			\caption{Road network and object index pre-processing costs (based on US road network and uniform object sets with $d{=}0.001$)}\label{tab:index}
		\end{minipage}
        \hfill
		\begin{minipage}{0.62\linewidth}
			\centering
			\begin{tabular}{|c|c|c|c|} \hline
				\textbf{Optimization} & \textbf{Before} & \textbf{After} & \textbf{\% Improv.} \\ \hline
				Subgraph Ordering & 2263MB & 1146MB & 49.4\% \\ \hline
				No Hashtable Dijk. & 129s & 86s & 33.3\% \\ \hline
				Border Set Dijk. & 86s ($\gamma{=}6.6$)  & 65s ($\gamma{=}4.7$) & 24.4\% \\ \hline
			\end{tabular}
			\caption{Impact of SUL-Tree pre-processing optimizations described in Section~\ref{sec:optmizations} (SUL-Tree parameters used: $m{=}2$, $b{=}8$, and $\alpha{=}1024$)}\label{tab:improvements}
		\end{minipage}
	}
\end{table}

\subsubsection{Pre-processing Improvements}\label{sec:exp:improvements}

We evaluate the SUL-Tree pre-processing improvements proposed in Section~\ref{sec:optmizations} with results detailed in Table~\ref{tab:improvements}. The first row reports the impact of the subgraph vertex ordering (Section~\ref{sec:optmizations:ordering}) on index size. By eliminating the need to store vertex IDs anywhere in the SUL-Tree, this significantly reduces the index size by almost 50\%. The second row shows 33.3\% improvement of $SDL$ construction time by using the new ordering to eliminate the use of hash-tables during the subgraph Dijkstra search. 

The final row reports the impact of the restricted Dijkstra subgraph search using border set lower-bounds (Section~\ref{sec:optmizations:search}) in terms of improved runtime and reduction in $\gamma$. Recall that $\gamma$ is the average factor by which subgraph Dijkstra search expansion exceeds the number of subgraph vertices. The $SDL$ construction time after the removal of hash-tables is reduced by a further 24.4\% using the new variant of Dijkstra search. This is slightly lower than the percentage improvement of $\gamma$ (28.8\%). While the $\gamma$ improvement shows a smaller proportion of vertices are visited than the runtime improvement, the difference is explained by the small overhead added by computing lower-bounds. Note that the reported times are specifically for subgraph Dijkstra search (i.e., populating $SDL$s) and excludes other tasks for SUL-Tree construction, such as partitioning, which are not targeted by any of our proposed optimizations.

\subsection{Index Parameter Testing}\label{sec:exp:parametertesting}

We investigate query performance while varying the structure and properties of COL-Tree on four variables: the branching factor $b$, the maximum leaf node $ODL$ size $\lambda$, the number of landmarks per node $m$, and the landmark selection policy. The tested values for each are listed in Table~\ref{tab:parameter_testing}. While COL-Tree outperforms competing methods even with naive values for parameters (e.g., $m{=}2$ random landmarks), valuable performance improvements can be extracted by carefully choosing them.

\smallHeadIndent{A Note on Parallelization:} While SUL-Tree construction is fast (taking no more than a few minutes), we use parallelization to make exhaustive parameter testing on hundreds of SUL-Trees feasible in only a few hours. The SUL-Tree node $SDL$ computation was easily parallelized using OpenMP, but we leave the full exploration of parallelization to future work. Note that we only use multi-cores during parameter testing for SUL-Tree construction (any reported time is single-threaded). We next present query performance of COLT-PHL with tested parameter values listed in Table~\ref{tab:parameter_testing}, along with finalized parameters for each query.

\begin{table}[h]
    \centering
    \begin{tabular}{|c|c|c|c|c|} \hline
        \textbf{Variable} & \textbf{Tested Values} & \textbf{A$k$NN} & \textbf{$k$FN} & \textbf{Range} \\ \hline 
        $b$ & 4, 8, 16, 32 & 8 & 4 & 8 \\ \hline
        $\lambda$ & 256, 512, 1024, 2048 & 256 & 256 & 256 \\ \hline
        $m$ & 2, 4, 8, 16 & 2 & 4 &  2 \\ \hline
        Landmark Type & Random, Minmax, Farthest, Slice & Random & Slice & Minmax  \\ \hline
    \end{tabular}
    \caption{Tested values and finalized COL-Tree parameters on US dataset for each query}\label{tab:parameter_testing}
\end{table}

Figures \ref{exp:parametertest:b} to \ref{exp:parametertest:lmktype} illustrate query time with varying COL-Tree parameters for each query type. While the parameter indicated in each figure is varied the remaining parameters take on the default values for that query indicated in Table~\ref{tab:parameter_testing} (see Table~\ref{tab:exp:variables} for other defaults). The overarching observation is that in most cases varying parameters does not significantly affect COL-Tree query performance, indicating that performance is not beholden to extensive parameter tuning. We analyze the noteworthy cases below:

\begin{itemize}
    \itemsep0em
    \item For A$k$NN queries, query performance degrades with increasing $m$ in Figure~\ref{exp:parametertest:m:aknn}. Recall that we must choose the tightest lower-bound over all landmarks as per Eq.~(\ref{eq:llbmax}). Hence, with increasing $m$, the cost of computing lower-bounds increases, which must be compensated by increased lower-bound accuracy. This is a less viable trade-off for A$k$NN queries, because even if we have more landmarks, it is still difficult to find one that is suitable for all query vertices. Thus, we simply take longer to compute similar quality lower-bounds with increasing $m$, making $m{=}2$ the best choice.
    \item $k$FN query performance is generally poorer for larger $b$ (Figure~\ref{exp:parametertest:b:fkn}) and $\lambda$ (Figure~\ref{exp:parametertest:lambda:fkn}). The query performance also fluctuates. This is likely caused by large values for $b$ and $\lambda$ being the most likely scenarios where extremely skewed tree structures can form. For example, if $\lambda$ is too large it could result in one $ODL$ with close to $\lambda$ objects while another with significantly less. Moreover, given the extremely low query times for $k$FN queries, it is more sensitive to minute changes in performance.
    \item Range query performance with different landmark types shows a preference for the Border Minmax method in Figure~\ref{exp:parametertest:lmktype:range}, with the improvement widening as density $d$ increases. These landmarks attempt to minimize the distances to all vertices in the subgraph. As a result they are more likely to produce the smallest minimum and maximum distances to objects on average. This is beneficial to range queries, as a small range of landmark-object distances increases the likelihood of being able to report whole swathes of objects immediately as results or prune them entirely.
\end{itemize}

\smallHead{Effect of Landmark Selection Policy:} While the above discussion was concerned with noteworthy differences between certain parameters, query performance for varying landmark selection methods is noteworthy due to lack of significant difference in Figure~\ref{exp:parametertest:lmktype}. \textit{Border Minmax} is marginally better on higher densities for range queries, while \textit{Slice-Based Furthest Border} mostly performs better for $k$FN queries. The latter makes intuitive sense for single query vertex queries like $k$FN queries, which benefit from improved upper-bounds when landmarks are ``between'' the query vertex and objects. Choosing borders from Euclidean slices ensures all angles of entry are covered, which is a common concept in road network shortest path computation, e.g., \textit{transit nodes}~\cite{bast2007tnr}. On the other hand, A$k$NN queries benefit from landmarks being ``behind'' multiple entry points due to multiple query vertices, meaning centrally located landmarks chosen by Border Minmax are more effective. But the intensive selection process makes random vertices more attractive for A$k$NN, as the difference is not significant. On one hand, the results suggests COL-Trees will perform reasonably well regardless of the landmarks selected. Nonetheless, it also implies landmark choice warrants further study.


\begin{figure}[h]
    \centering
    \begin{subfigure}[b]{0.32\linewidth}
        \includegraphics[width=\linewidth]{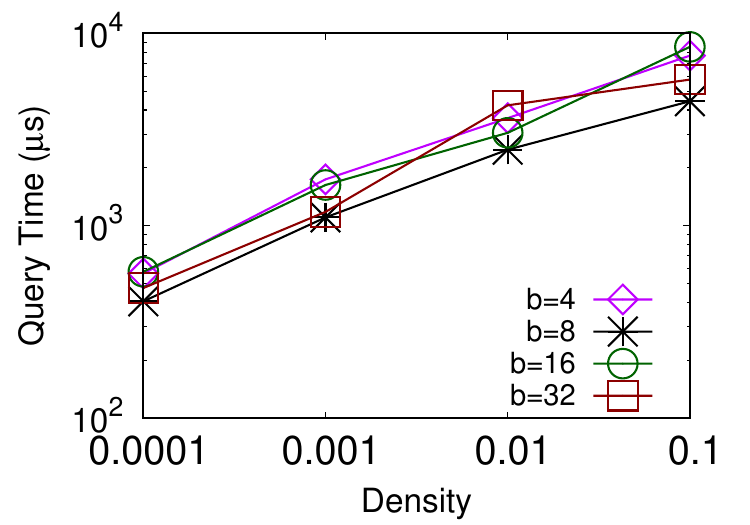}
        \caption{A$k$NN}
        \label{exp:parametertest:b:aknn}
    \end{subfigure}
    \hfill
    \begin{subfigure}[b]{0.32\linewidth}
        \includegraphics[width=\linewidth]{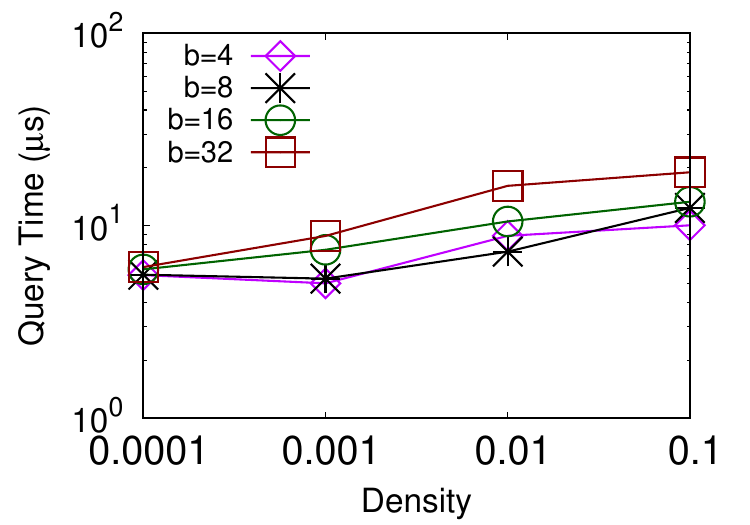}
        \caption{$k$FN}
        \label{exp:parametertest:b:fkn}
    \end{subfigure}
    \hfill
    \begin{subfigure}[b]{0.32\linewidth}
        \includegraphics[width=\linewidth]{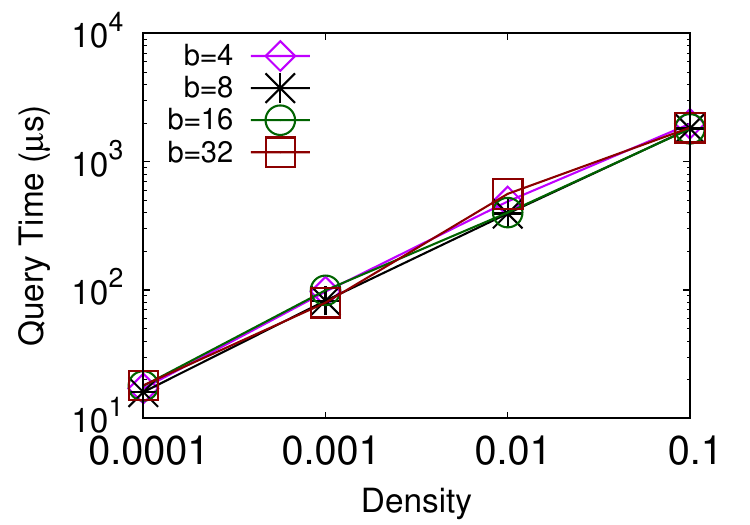}
        \caption{Range}
        \label{exp:parametertest:b:range}
    \end{subfigure}
	\caption{Query Performance with Varying Branching Factor $b$}
	\label{exp:parametertest:b}
\end{figure}

\begin{figure}[h]
    \centering
    \begin{subfigure}[b]{0.32\linewidth}
        \includegraphics[width=\linewidth]{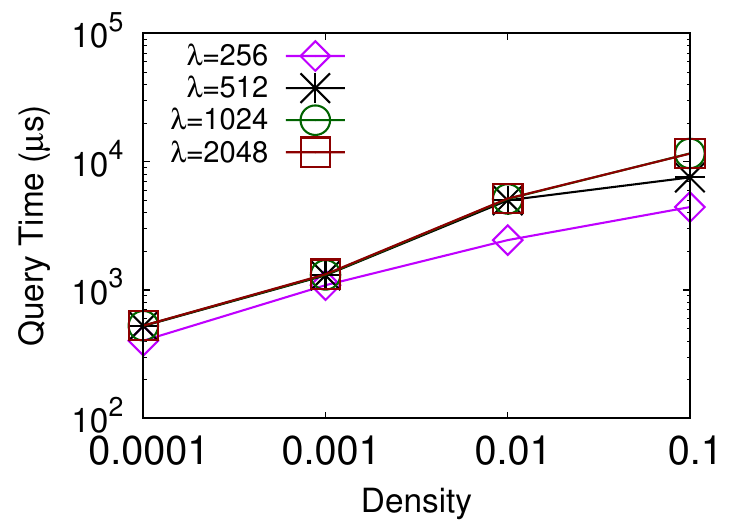}
        \caption{A$k$NN}
        \label{exp:parametertest:lambda:aknn}
    \end{subfigure}
    \hfill
    \begin{subfigure}[b]{0.32\linewidth}
        \includegraphics[width=\linewidth]{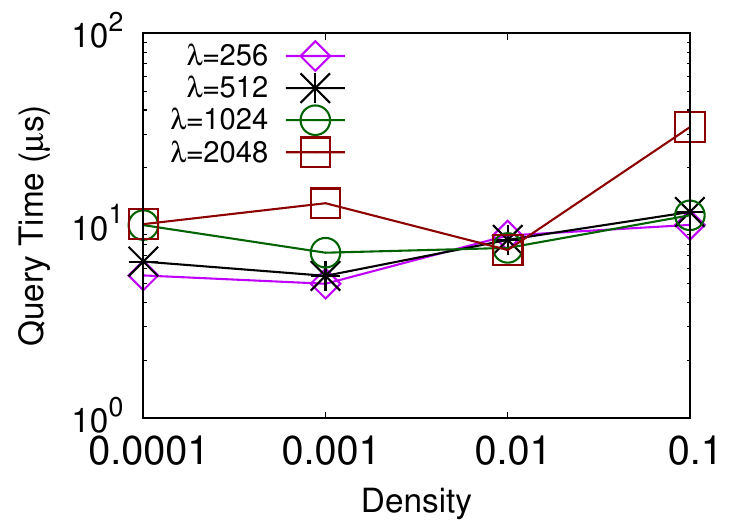}
        \caption{$k$FN}
        \label{exp:parametertest:lambda:fkn}
    \end{subfigure}
    \hfill
    \begin{subfigure}[b]{0.32\linewidth}
        \includegraphics[width=\linewidth]{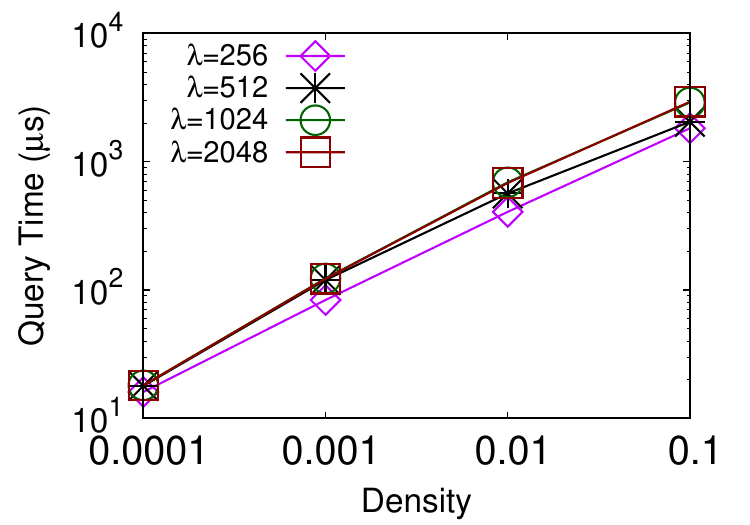}
        \caption{Range}
        \label{exp:parametertest:lambda:range}
    \end{subfigure}
	\caption{Query Performance with Varying Object List Size $\lambda$}
	\label{exp:parametertest:lambda}
\end{figure}

\begin{figure}[h]
    \centering
    \begin{subfigure}[b]{0.32\linewidth}
        \includegraphics[width=\linewidth]{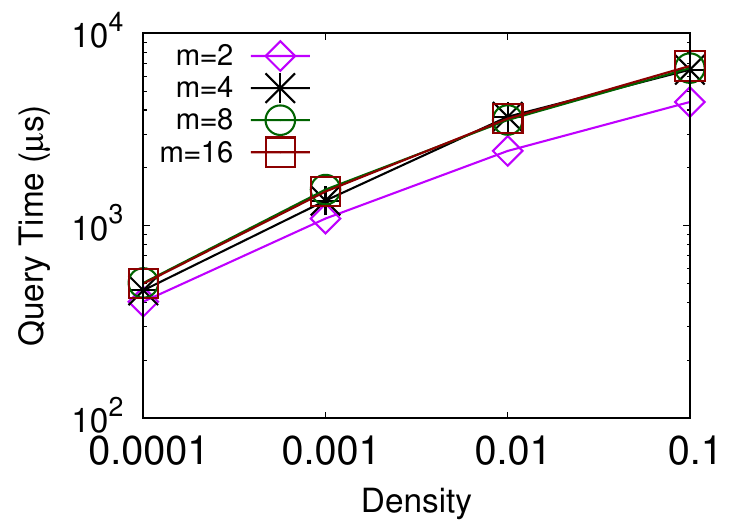}
        \caption{A$k$NN}
        \label{exp:parametertest:m:aknn}
    \end{subfigure}
    \hfill
    \begin{subfigure}[b]{0.32\linewidth}
        \includegraphics[width=\linewidth]{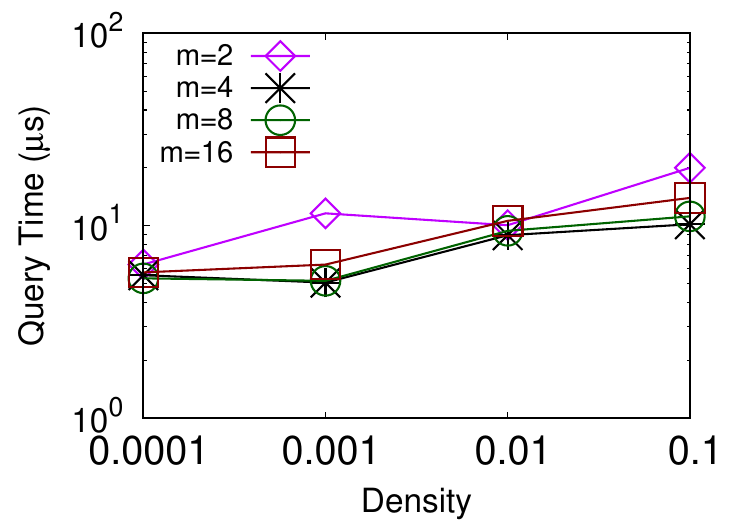}
        \caption{$k$FN}
        \label{exp:parametertest:m:fkn}
    \end{subfigure}
    \hfill
    \begin{subfigure}[b]{0.32\linewidth}
        \includegraphics[width=\linewidth]{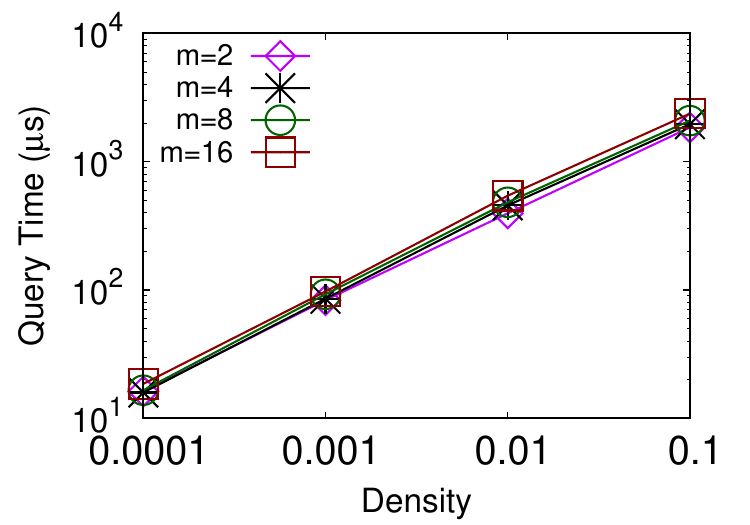}
        \caption{Range}
        \label{exp:parametertest:m:range}
    \end{subfigure}
	\caption{Query Performance with Varying Number of Landmarks $m$}
	\label{exp:parametertest:m}
\end{figure}

\begin{figure}[h]
    \centering
    \begin{subfigure}[b]{0.32\linewidth}
        \includegraphics[width=\linewidth]{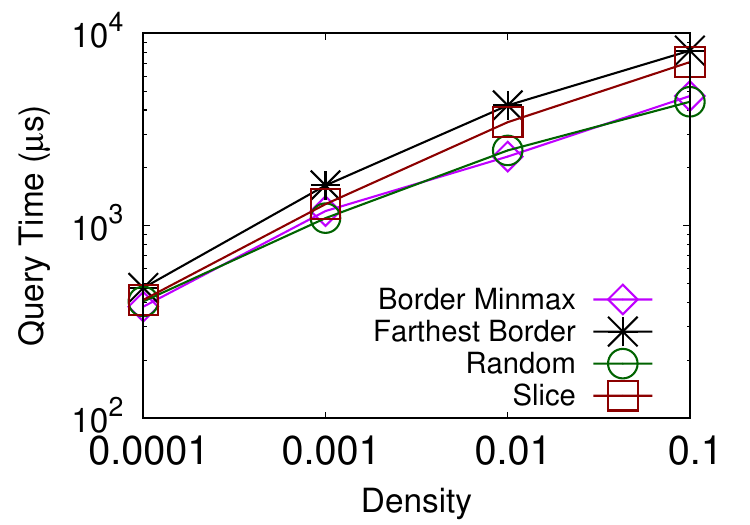}
        \caption{A$k$NN}
        \label{exp:parametertest:lmktype:aknn}
    \end{subfigure}
    \hfill
    \begin{subfigure}[b]{0.32\linewidth}
        \includegraphics[width=\linewidth]{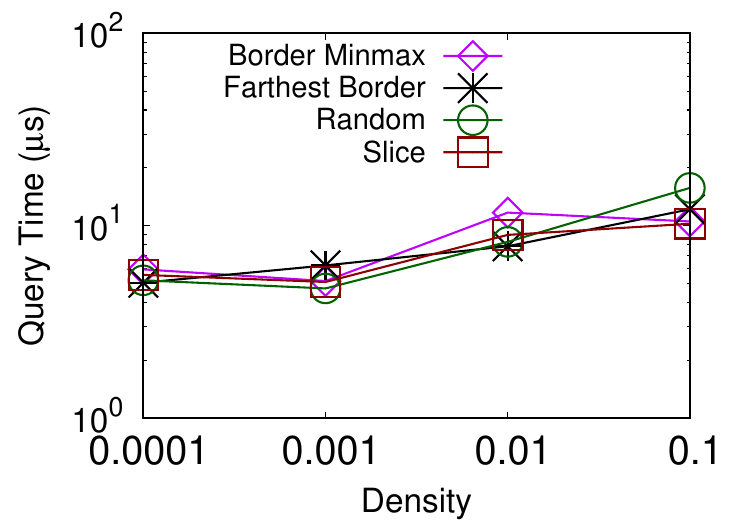}
        \caption{$k$FN}
        \label{exp:parametertest:lmktype:fkn}
    \end{subfigure}
    \hfill
    \begin{subfigure}[b]{0.32\linewidth}
        \includegraphics[width=\linewidth]{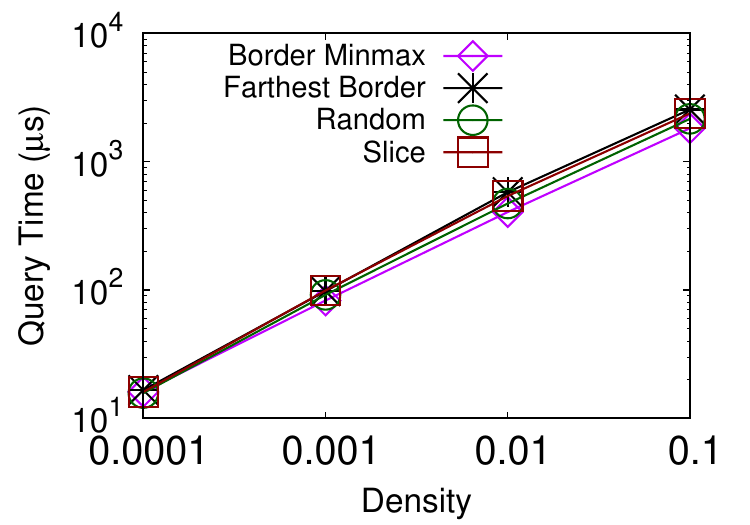}
        \caption{Range}
        \label{exp:parametertest:lmktype:range}
    \end{subfigure}
	\caption{Query Performance with Varying Landmark Type}
	\label{exp:parametertest:lmktype}
\end{figure}

\section{Related Work}
In this section, we review additional literature related to object search queries, road network indexing techniques, and other work broadly related to landmarks. 

\smallHeadIndent{Aggregate Nearest Neighbors: } In Section~\ref{intro}, we extensively discussed the shortcomings of the state-of-the-art A$k$NN techniques for road networks~\cite{yiu2006aknn,zhu2010vn3aknn}, which motivated COL-Tree. The IER-based technique suffers from the inaccuracy of Euclidean distance, especially when weights represent the more realistic travel time along edges. This is a more serious problem in A$k$NN search as the error will also be aggregated, further limiting search space pruning. The NVD-based concurrent expansion suffers from high pre-processing costs and excessive expansions before candidates can be generated. All previous variants of A$k$NN search such as Flexible ANNs~\cite{yao2018fann,chen2021kwfann} or involve moving objects~\cite{cho2022mkfn} are derived from these less desirable heuristics. One recent work on Flexible ANNs~\cite{chung2022kfann} (published after the conference version of our work) utilizes M-trees~\cite{ciaccia1997mtree}, which resembles a simplified version of COL-Tree that does not utilize graph partitioning, tightening lower-bounds over multiple landmarks (i.e., references nodes), and other features of COL-Trees such as subgraph vertex ordering that make it highly suited to road network graph search.

\smallHeadIndent{Farthest Neighbors: } Farthest neighbor search has not been widely studied in the context of road networks. ~\citet{bose2013farthestptdiag} propose a theoretical method for Farthest Point Diagrams in planar graphs, a dual to NVDs. NVD construction involves an expensive Dijkstra search from each object, but only needs to settle a vertex once as each search can prune vertices already settled by another search. FPDs cannot benefit from this, potentially settling $|V|^2$ vertices for \textit{each object set}, which is significantly more expensive than even road network index construction. As a result, this approach is not practicable in most real-world applications. Other works~\cite{wang2014afn,cho2022mkfn,tran2009rfkn} incorporate the farthest metric into variants of $k$NN queries but utilize typical $k$NN heuristics. COL-Tree, which enables efficient incremental upper-bound-based candidate retrieval, would be highly suited to better solving these farthest query variants.

\smallHeadIndent{$k$ Nearest Neighbors:} 
Early $k$NN methods involved creating a single index incorporating the object set $P$ and the road network $G$ together, which is then used to both find objects and compute distances. For example, Network Voronoi Diagrams were first utilized in VN\textsuperscript{3} \cite{kolahdouzan2004vknn}, where the index includes distances between Voronoi regions to simultaneously compute distances to new candidates. Furthermore, many of these early techniques only supported fixed $k$~\cite{cho2005unicons}. Most recent techniques have proposed road network indexes that are utilized by multiple \textit{object indexes}, e.g., G-tree and its Occurrence Lists~\cite{zhong2015gtree}. Object indexes are typically much less costly to compute and, as previously observed~\cite{abey2016knn}, this \textit{decoupling} is highly advantageous and aligns more clearly with real-world usage. For example, a ride-hailing service frequently computes $k$NNs as drivers move and new passengers make requests. Thus, the object indexes change much more frequently than the road network, benefiting from fast rebuilding. Moreover, this decoupling has allowed more in-depth study into the candidate retrieval heuristics. For example, \citet{abey2016knn} found that a Euclidean distance heuristic could more efficiently utilize the G-tree index than its own search heuristic. It is worthwhile to underscore these benefits, as recent work~\cite{wang2024topknn} has reverted to non-decoupled indexes and fixed $k$. We refer the reader to a recent experimental review~\cite{abey2016knn} for a detailed overview of $k$NN techniques. 

\smallHeadIndent{Road Network Indexes:} Indexing techniques to more efficiently answer network distance queries on road networks have been extensively studied for several decades~\cite{wu2012shortest}. Advancement in this area is  orthogonal to our work, as object search utilizing COL-Tree is decoupled from the network distance computation. Thus, faster network distance querying will further improve the performance of COL-Tree. Typically, the lowest query times are provided by the most pre-processing intensive indexes, e.g., 2-hop labels. We chose PHL~\cite{kawata2014phl} because it provides one of the best trade-offs between network distance query time and pre-processing time for 2-hop label techniques. Nonetheless, COL-Tree can easily utilize the latest 2-hop labeling methods~\cite{farhan2023h2cl} with faster queries but more expensive pre-processing. The powerful flexibility offered by a decoupled index allows the real-world system designer to choose a method that matches the amount of pre-processing they can tolerate. For example, as an alternative, if the road network index must be rebuilt or updated frequently, the user can choose a lower cost index such as G-tree~\cite{zhong2015gtree} or Contraction Hierarchies~\cite{geisberger2008ch}. 

\smallHeadIndent{Landmarks and Other Search Heuristics:} In road networks, landmarks were originally utilized as an improved heuristic for A* shortest path search~\cite{goldberg2005alt}. This extensively studied concept is also referred to as differential heuristics~\cite{sturtevant2009memheur} or beacons~\cite{fonseca2005beacons}. Many related works focus on reducing the storage and memory costs, e.g., computing landmark distances to only part of the graph and compensating for this by computing distances between the landmarks~\cite{kriegal2008hge}. Other work~\cite{goldberg2007reachlmk} has focused on improving accuracy, e.g., through better landmark selection by combining landmarks with \textit{reach}-based pruning. Many of these works are complementary to our techniques, e.g., integrating reach into COL-Tree could improve heuristic pruning and reduce false candidates or Compressed Differential Heuristics~\cite{goldenberg2011cdh} could improve lower-bound computation time without loss of accuracy. Other notable heuristics include Euclidean distance-based heuristics~\cite{rayner2011euclid}, compressed path databases~\cite{bono2019cpd}, portal-based true-distance~\cite{goldenberg2010portal}, and FastMap~\cite{cohen2018fastmap} heuristics. 

\smallHeadIndent{Landmark-Based Object Search in Road Networks:} Past attempts~\cite{kriegal2008hge} to leverage road network LLBs for object search based on the multi-step $k$NN paradigm~\cite{seidel1998knn} required a ranking that necessarily involved computing lower-bounds to all objects. As discussed, this approach is not scalable with object set size and does not compare favorably to methods that are capable of heuristic pruning. This is confirmed by the performance of the baseline $k$FN method, which takes a similar approach. Road Network Embedding (RNE) \cite{shahabi2002rne} involves transforming the road network into higher-dimensional space and using Minkowski metrics to estimate network distance. However, the proposed $k$NN method is approximate. \citet{qiao2013knk} propose to use shortest path trees to compute distance estimates based on tree distance, but their technique is also approximate \cite{qiao2013knk} and is applied to the alternative \emph{keyword} search problem in road networks. In contrast, COL-Tree provides a general data structure with an admissible heuristic that can be employed to answer a variety of road network queries with exact results.

\smallHeadIndent{Landmark-Based Search in Other Contexts:} \citet{mouratidis2016joint} proposed computing landmark lower-bounds to groups of users in a social network similar to Eq.~(\ref{eq:nodellb}). However, their technique is not designed for hierarchical graph traversals and, unlike COL-Tree, does not address the problem of computing more accurate lower-bounds (e.g., with more landmarks) necessary for an effective hierarchical search. M-trees~\cite{ciaccia1997mtree}, typically used for similarity search and multimedia databases, can be considered as a simplified version of COL-Tree. However, applying them to hierarchical graph traversal is not straightforward, with road networks in particular presenting domain-specific challenges. For example, COL-Trees utilize a SUL-Tree to significantly reduce construction time through pre-processing the road network first and use multiple landmarks (i.e., references nodes) to obtain better bounds. Landmark upper-bounds have been studied as a means for distance estimation in large-scale networks with the observation that they tend to be closer to the true network distances than lower-bounds~\cite{potamias2009distest}, but the utility in search was not considered.

\section{Conclusion \& Future Work}
We present COL-Tree, an efficient index to support hierarchical search in road network graphs using landmark lower-bounds. We show  that COL-Trees are highly suited to several queries where traditional search heuristics fall short: $A$kNN, $k$FN, and range queries. For example, A$k$NN query results are often not close to any single agent location and are more easily located using a hierarchical subgraph traversal supported by COL-Tree. Combined with its novel property for convexity-preserving aggregate functions, we can retrieve more promising A$k$NN candidates and terminate the search sooner. Moreover, the proposed data structure is light-weight in both theory and practice. Lastly, our experiments revealed significantly improved performance of up to several orders of magnitude, particularly for A$k$NN and $k$FN queries.

\smallHeadIndent{Future Directions:} COL-Tree is a versatile data structure with potential application to other heuristic search problems. For example, we believe COL-Tree has the potential to be widely used in a similar way to R-trees in location-based services (LBS). While road network distance is more accurate than Euclidean distance, many LBS rely on R-trees for lower pre-processing cost despite poor accuracy. As COL-Tree has significantly reduced the gap, it is now more practicable to consider revisiting graph-based approaches for challenging problems such as Reverse $k$NNs~\cite{yang2015rknn}. In fact, COL-Tree has recently been applied to improve real-world driver-passenger matching in ride-hailing~\cite{abey2021ikm}. There are also many possible directions to improve heuristic performance, such as landmark selection or applying Compressed Differential Heuristics~\cite{goldenberg2011cdh}.

\section*{Acknowledgments}
Tenindra Abeywickrama is supported by JSPS KAKENHI Grant Number 25K24399. Muhammad Aamir Cheema is supported by Australian Research Council DP230100081. 

\appendix

\section{\texorpdfstring{$k$}{k}NN Querying with COL-Tree}\label{app:knn}

Recall one of the motivations for COL-Tree was the fact that most existing heuristics were designed with $k$NN queries in mind and did not translate well to other problems such as A$k$NN and $k$FN search. However, in Section~\ref{sec:exp:range}, we found that COL-Tree could still perform on par or better than state-of-the-art $k$NN heuristics for range queries due to certain advantages offered by COL-Tree. This was particularly beneficial given the significantly lower pre-processing cost of COL-Trees. In this appendix we investigate the performance of COL-Tree for $k$NN queries to illustrate scenarios where it could be a better option. We omit the pseudocode for the $k$NN query algorithm, as it can be easily derived by simplifying Algorithm~\ref{alg:hpolaknn} based on the assumption that the query vertex set $Q$ contains only one vertex.

\subsection{Experimental Results}\label{sec:exp:knn}
As expected, the state-of-the-art NVD heuristic offers the best performance on real-world POI datasets in Figure~\ref{exp:knn:rw}, with COLT-PHL coming second in all cases. The performance gap between NVD-PHL and COLT-PHL is negligible on small POI sets because the heuristics need only differentiate between a small pool of candidates to determine exact results. Nonetheless, even on the largest datasets, the difference is not significant. For example, the worst performance for COLT-PHL is for schools where it is 2.5\texttimes~slower, but the difference is only 21$\mu$s. Based on Table~\ref{tab:index}, a system using NVDs would take $5.55$s to construct an index and process a workload of 10,000 $k$NN queries. In that time, a COL-Tree could be constructed and serve more than 150,000 queries, making it preferable when amortization opportunities are limited. COL-Tree parameters for $k$NN querying were $m{=}4$ slice-type landmarks, $b{=}8$, and $\lambda{=}256$.

{
	\begin{figure}[h]
		\centering
		\includegraphics[width=0.65\linewidth]{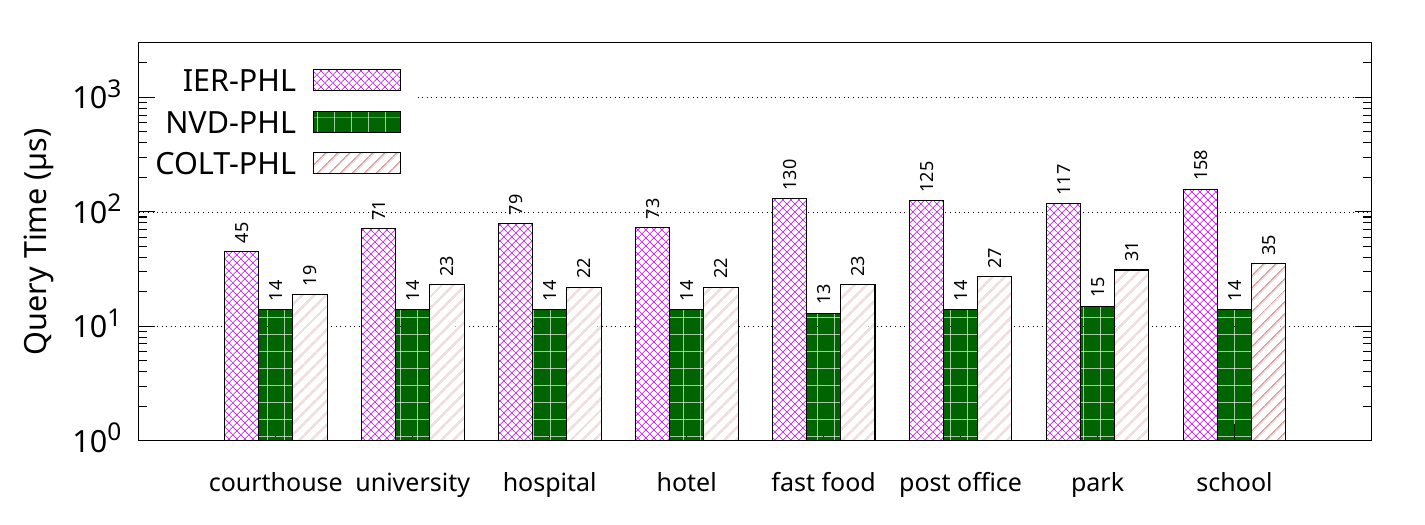}
		\caption{$k$NN query performance on different real-world POI sets  
        }
        \label{exp:knn:rw}
	\end{figure}
}

\subsubsection{Sensitivity Analysis}

\smallHead{Effect of $\boldsymbol{d}$:} 
With increasing density $d$, the total number of objects increases, and as a result hierarchical methods like IER-PHL and COLT-PHL scale more poorly than NVD-PHL in Figure~\ref{exp:knn:varyd}. In this scenario, hierarchical methods are disadvantaged because they must traverse increasingly larger tree paths. Conversely, NVDs need only consider the neighboring Voronoi regions of the NNs found so far, which does not increase significantly (and in fact converges to the average vertex degree with increasing density). 

\smallHead{Effect of $\boldsymbol{k}$:} 
COLT-PHL is only marginally worse than NVD-PHL for varying $k$ in Figure~\ref{exp:knn:varyk} as it is much less disadvantaged on the default density of $0.001$. We note $0.001$ corresponds to a more realistic object set size judging by the densities of real-world POI sets in Table~\ref{tab:pois}. The only exception is $k{=}1$, where NVD-PHL is optimal (the exact $1$NN can be retrieved immediately by NVD definition).

\smallHead{Heuristic Efficiency:}
Given that both methods rely on landmark lower-bounds, COLT-PHL does not need to compute significantly more network distances at lower densities than NVD-PHL as shown in Figure~\ref{exp:knn:heur:varyd:fh}. While at higher densities lower-bounds become less accurate, NVD-PHL is insulated against this because of the previously mentioned convergence of the number of neighboring Voronoi regions to the average vertex degree with increasing $d$. The main reason for COLT-PHL's poorer performance is illustrated in Figure~\ref{exp:knn:heur:varyd:lb}, which shows a significantly higher number of candidate objects are retrieved. This confirms that hierarchical traversals are more susceptible to false candidates in proximity search.

\begin{figure}[h]
    \centering
    \begin{subfigure}[b]{0.24\linewidth}
        \includegraphics[width=\linewidth]{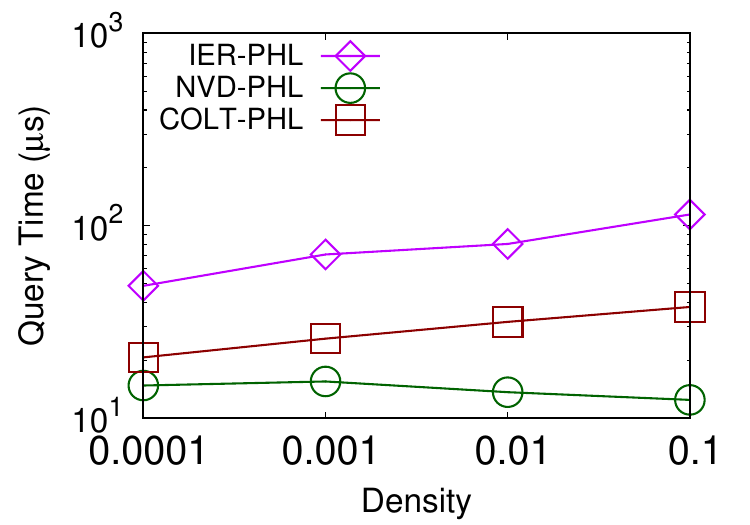}
        \caption{Varying $d$}
        \label{exp:knn:varyd}
    \end{subfigure}
    \hfill
    \begin{subfigure}[b]{0.24\linewidth}
        \includegraphics[width=\linewidth]{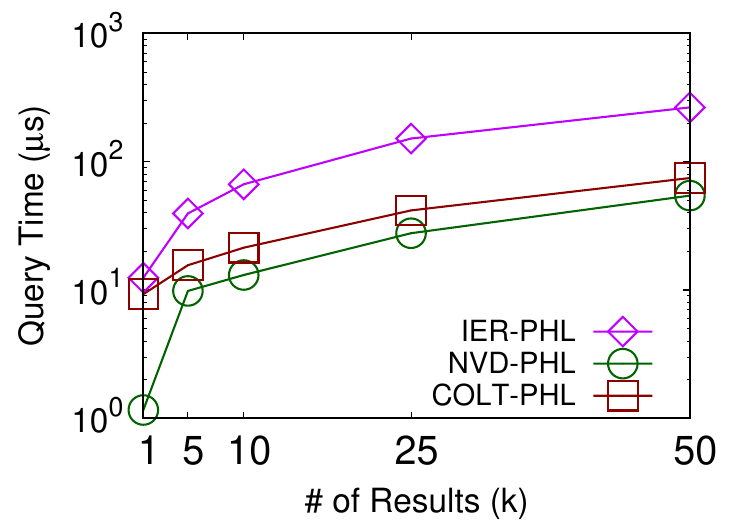}
        \caption{Varying $k$}
        \label{exp:knn:varyk}
    \end{subfigure}
    \hfill
    \begin{subfigure}[b]{0.24\linewidth}
        \includegraphics[width=\linewidth]{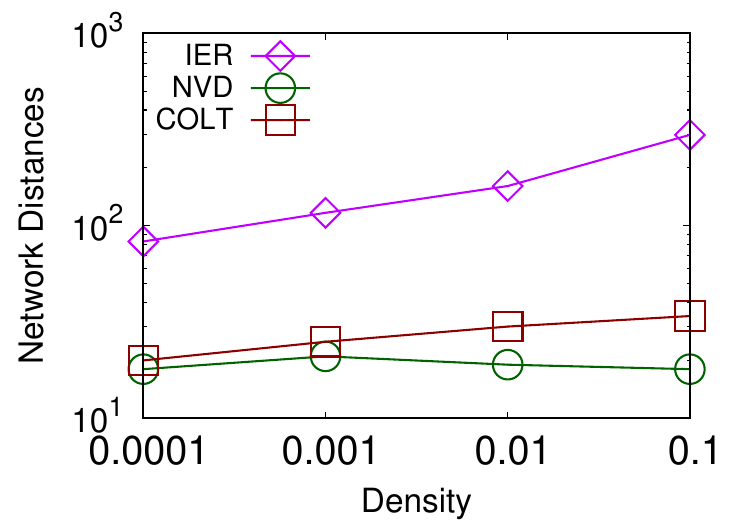}
        \caption{Network Distances}
        \label{exp:knn:heur:varyd:fh}
    \end{subfigure}
    \hfill
    \begin{subfigure}[b]{0.24\linewidth}
        \includegraphics[width=\linewidth]{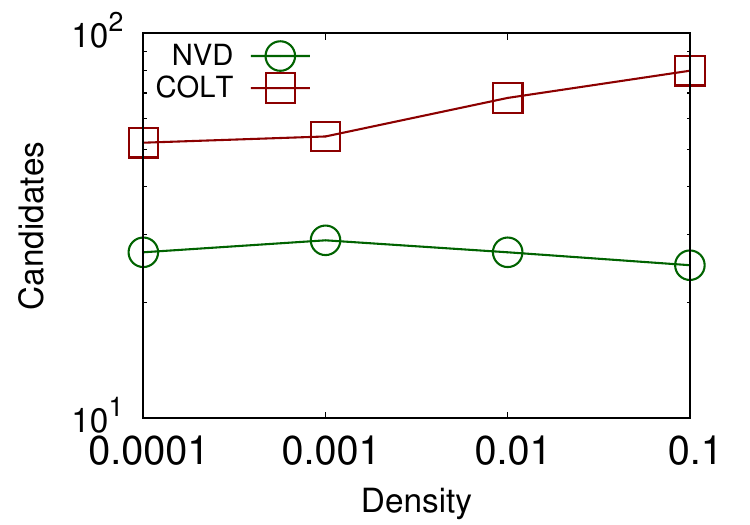}
        \caption{Candidates Retrieved}
        \label{exp:knn:heur:varyd:lb}
    \end{subfigure}
	\caption{Performance for $k$NN Queries 
	}
	\label{exp:knn}
\end{figure}

\bibliographystyle{elsarticle-num-names}

{\small\bibliography{references}}

\end{document}